\theoremstyle{plain}
\newtheorem*{theorem*}{Theorem}
\newtheorem*{proposition*}{Proposition}
\newtheorem{lemma}{Lemma}
\theoremstyle{definition}
\theoremstyle{plain}
\newenvironment{manualtheorem}[1]{%
  \manualtheoreminner
}{\endmanualtheoreminner}
\newenvironment{remark*}
{\pushQED{\qed}\remarkxx}
{\popQED\endremarkxx}
\newenvironment{example}
{\pushQED{\qed}\examplex}
{\popQED\endexamplex}
\newenvironment{example*}
{\pushQED{\qed}\examplexx}
{\popQED\endexamplexx}
\newcommand{\B}[1]{\boldsymbol{{#1}}}
\newcommand{\tuple}[1]{\langle {#1} \rangle}
\newcommand{\E}{\mathcal{E}}
\newcommand{\G}{\mathcal{G}}
\newcommand{\BG}{\mathcal{B}}
\newcommand{\V}{\mathcal{V}}
\newcommand{\adj}[2]{\mathrm{adj}_{#1}(#2)}
\newcommand{\indep}{\mathrel{\perp\mspace{-10mu}\perp}}
\newcommand{\dep}{\centernot{\indep}}
\title{Robustness of Model Predictions under Extension}
\author[1]{\href{mailto:<tineke_blom@hotmail.com>}{Tineke~Blom}{}}
\author[2]{\href{mailto:<j.m.mooij@uva.nl>}{Joris M.~Mooij}{}}
\affil[1]{%
    Informatics Institute\\
    University of Amsterdam\\
    The Netherlands
}
\affil[2]{%
    Korteweg-De Vries Institute for Mathematics\\
    University of Amsterdam\\
    The Netherlands
}
\begin{document}
\maketitle

\begin{abstract}%
Mathematical models of the real world are simplified representations of complex systems. A caveat to using mathematical models is that predicted causal effects and conditional independences may not be robust under model extensions, limiting applicability of such models. In this work, we consider conditions under which qualitative model predictions are preserved when two models are combined. Under mild assumptions, we show how to use the technique of \emph{causal ordering} to efficiently assess the robustness of qualitative model predictions. We also characterize a large class of model extensions that preserve qualitative model predictions. For dynamical systems at equilibrium, we demonstrate how novel insights help to select appropriate model extensions and to reason about the presence of feedback loops. We illustrate our ideas with a viral infection model with immune responses.
\end{abstract}


\section{Introduction}

A popular class of mathematical models that can represent uncertainty and causality are Structural Causal Models (SCMs) \citep{Pearl2009,Bongers++_AOS_21}.
However, there are several interesting systems for which the causal relations and Markov properties cannot be modelled by SCMs \citep{Blom2019}. The causal ordering algorithm, introduced by \citet{Simon1953}, can be used to deduce the qualitative predictions of mathematical models for these systems with regards to the causal relations between the variables in the system, and the probabilistic independence relations between the variables \citep{Blom2020}. In this paper, we take a closer look at what happens to these predictions when two systems are combined. Particularly, we give conditions under which properties of the whole system can be understood in terms of properties of its parts. We discuss how a holistic approach towards causal modelling may result in novel insights when we derive and test the predictions of systems for which new properties emerge from the combination of its parts.

In the first part of the paper, we focus on the practical issue of assessing whether qualitative model predictions are robust under model extensions. We revisit the observations of \citet{Boer2012} who demonstrated that qualitative predictions of a certain viral infection model change dramatically when the model is extended with extra equations describing simple immune responses. To assess the robustness of predicted causal relations or conditional independences under such an alteration of the model, it is useful to characterize a class of model extensions that lead to unaltered qualitative model predictions. In this work, we propose the technique of causal ordering \citep{Simon1953} as an efficient method to assess the robustness of qualitative causal predictions. Under mild conditions, this allows us to characterize a large class of model extensions that preserve qualitative causal predictions. We also consider the class of models that are obtained from the equilibrium equations of dynamical models where each variable is \emph{self-regulating}. For this class, we show that the predicted presence of causal relations and absence of conditional independences is robust when the model is extended with new equations.

Key aspects of the scientific method include generating a model or hypothesis that explains a phenomenon, deriving testable predictions from this model or hypothesis, and designing an experiment to test these predictions in the real world. The promise of causal discovery algorithms is that they are able to learn causal relations from a combination of background knowledge and data. The general idea of many constraint-based approaches (e.g. PC or FCI and variants thereof \citep{Spirtes2000, Zhang2008, Colombo2012}) is to exploit information about conditional independences in a probability distribution to construct an equivalence class of graphs that encode certain aspects of the probability distribution, and then draw conclusions about the causal relations from the graphs. There is a large amount of literature concerning particular algorithms for which the learned structure expresses causal relations under various combinations of assumptions (e.g.\ linearity, causal sufficiency, absence of feedback loops), see for example \citep{Richardson1999, Spirtes2000, Lacerda2008, Zhang2008, Colombo2012, Hyttinen2012, Forre2018, Strobl2018, Mooij2020}. In the last part of this paper, our main interest is in dynamical models with the property that directed graphs representing relations between variables by encoding the conditional independences of their equilibrium distribution should not be interpreted causally at all. For the case that a model for a subsystem is given, we present novel insights that enable us to reject model extensions based on conditional independences in equilibrium data of the subsystem. We demonstrate how this approach allows us to reason about the presence of variables that are not self-regulating and feedback mechanisms that involve unobserved variables from the equilibrium distribution of certain dynamical models. 

\subsection{Causal Ordering Graph}
\label{sec:introduction:causal ordering graph}

Here, 
we give a concise introduction to the technique of causal ordering,
introduced by \citet{Simon1952}.\footnote{Actually, 
we consider an equivalent algorithm for causal ordering that was shown to be
more computationally efficient by \citep{Nayak1995, Goncalves2016}. For more
details, see \citep{Blom2020}.} 
In short, the causal ordering algorithm takes a set of equations as input and
returns a \emph{causal ordering graph} that encodes the effects of
interventions and a \emph{Markov ordering graph} that implies conditional
independences between variables in the model \citep[Theorem 17, ][]{Blom2020}.  Compared with
the popular framework of structural causal models \citep{Pearl2009,Bongers++_AOS_21}, the
distinction between the causal ordering and Markov ordering graphs does not
provide new insights for acyclic models, but it results in non-trivial
conclusions for models with feedback, as suggested in the discussion in 
Section~\ref{sec:causal ordering for a viral infection model:causal interpretation}
and explained in detail in \citep{Blom2020}.

We consider models consisting of equations $F$ that contain endogenous
variables $V$, independent exogenous random variables $W$, and (constant, exogenous)
parameters $P$. The structure of equations and the endogenous variables that
appear in them can be represented by the \emph{associated bipartite graph}
$\BG=\tuple{V, F, E}$, where each endogenous variable is associated with a
distinct vertex in $V$, and each equation is associated with a distinct vertex
in $F$. There is an edge $(v-f)\in E$ if and only if variable $v\in V$ appears
in equation $f\in F$. The causal ordering algorithm constructs a \emph{directed
cluster graph} $\tuple{\V,\E}$, where $\V$ is a partition of vertices $V \cup F$ into
clusters and $\E$ is a set of directed edges from vertices in $V$ to clusters
in $\V$. Given a bipartite graph $\BG=\tuple{V,F,E}$ with a perfect matching\footnote{A 
perfect matching $M$ is a subset of edges in a bipartite graph so that every
vertex is adjacent to exactly one edge in $M$. If $(v-f) \in M$ we say that 
$v$ and $f$ are matched in $M$. Note that not every bipartite
graph has a perfect matching; the theory can be extended to bipartite
graphs that have no perfect matching by making use of maximal matchings instead
\citep{Blom2020}.}
$M$, the causal ordering algorithm proceeds with the following three
steps \citep{Nayak1995, Blom2020}:

\begin{enumerate}
  \item For $v\in V$, $f\in F$ orient edges $(v-f)$ as $(v\leftarrow f)$ when $(v-f)\in M$ and as $(v\rightarrow f)$ otherwise; this yields a directed graph $\G(\BG,M)$. \label{alg:step 1}
  \item Find all strongly connected components $S_1,S_2,\ldots, S_n$ of $\G(\BG,M)$. Let $\V$ be the set of clusters $S_i\cup M(S_i)$ for $i\in\{1,\ldots,n\}$, where $M(S_i)$ denotes the set of vertices that are matched to vertices in $S_i$ in matching $M$. \label{alg:step 2}
  \item Let $\mathrm{cl}(f)$ denote the cluster in $\V$ containing $f$. For each $(v\to f)$ in $\G(\BG,M)$ such that $v\notin \mathrm{cl}(f)$ add an edge $(v\to\mathrm{cl}(f))$ to $\E$.\label{alg:step 3}
\end{enumerate}

Independent exogenous random variables and parameters are then added
as singleton clusters with edges towards the clusters of the equations in which
they appear. It has been shown that the resulting directed cluster graph
$\mathrm{CO}(\BG)=\tuple{\V,\E}$, which we refer to as the \emph{causal
ordering graph}, is independent of the choice of perfect matching \citep[Theorem 4, ][]{Blom2020}. 
Example~\ref{ex:model equations as bipartite graph} shows how
the algorithm works and a graphical illustration of the algorithm for a more
elaborate cyclic model can be found in Section~\ref{sec:supplement:example} of the Supplementary Material.

\begin{example}
  \label{ex:model equations as bipartite graph}
  Let $V=\{v_1,v_2\}$, $W=\{w_1,w_2\}$, and $P=\{p_1,p_2\}$ be index sets. Consider model equations $f_1$ and $f_2$ with endogenous variables $(X_v)_{v\in V}$, exogenous random variables $(U_w)_{w\in W}$ and parameters $(C_p)_{p\in P}$:
  \begin{align}
  f_1:&\qquad C_{p_1} X_{v_1} - U_{w_1} = 0,\\
  f_2:&\qquad C_{p_2} X_{v_2} + X_{v_1} + U_{w_2} =0.
  \end{align}

  The bipartite graph $\BG=\tuple{V,F,E}$ in Figure~\ref{fig:example
  bipartite}, with $E=\{(v_1-f_1), (v_1-f_2), (v_2-f_2)\}$ is a compact
  representation of the model structure. This graph has a perfect matching
  $M=\{(v_1-f_1), (v_2-f_2)\}$. By orienting edges in $\BG$ according to the
  rules in step \ref{alg:step 1} of the causal ordering algorithm we obtain the
  directed graph $\tuple{V\cup F,E_{\mathrm{dir}}}$ with
  $E_{\mathrm{dir}}=\{(f_1\to v_1), (f_2\to v_2), (v_1\to f_2)\}$. The clusters
  $C_1=\{v_1,f_1\}$ and $C_2=\{v_2,f_2\}$ are added to $\V$ in
  step~\ref{alg:step 2} of the algorithm, and the edge $(v_1\to C_2)$ is added
  to $\E$ in step~\ref{alg:step 3}. Finally, we add the parameters $P$ and
  independent exogenous random variables $W$ as singleton clusters to $\V$, and
  the edges $(p_1\to C_1)$, $(w_1 \to C_1)$, $(p_2\to C_2)$, and $(w_2\to C_2)$
  to $\E$. The resulting causal ordering graph is given in
  Figure~\ref{fig:example causal ordering}.
\end{example}

Throughout this work, we will assume that models are \emph{uniquely solvable with
respect to the causal ordering graph}, which roughly means that for each
cluster, the equations in that cluster can be solved uniquely for the
endogenous variables in that cluster (see \citep[Definition 14, ][]{Blom2020}
for details). A \emph{perfect intervention on a cluster} that contains equation
vertices represents a model change where the equations in the targeted cluster
are replaced by equations that set the endogenous variables in that cluster
equal to constant values. A \emph{soft intervention} targets an equation, parameter,
or exogenous variable, but does not affect which variables appear in the
equations. We say that there is a \emph{directed path} from a vertex $x$ to a vertex
$y$ in a causal ordering graph $\tuple{\V,\E}$ if either
$\mathrm{cl}(x)=\mathrm{cl}(y)$ or there is a sequence of clusters
$C_1=\mathrm{cl}(x),C_2,\ldots, C_{k-1}, C_k=\mathrm{cl}(y)$ so that for all
$i\in\{1,\ldots,k-1\}$ there is a vertex $z_i\in C_i$ such that $(z_i\to
C_{i+1})\in\mathcal{E}$. It can be shown that (a) the presence of a directed
path from a cluster, equation, parameter, or exogenous variable that is
targeted by a soft intervention towards a certain variable in the causal
ordering graph implies that the intervention has a generic effect on that
variable, and (b) if no such path exists there is no causal effect of the
intervention on that variable \citep[Theorem 20, ][]{Blom2020}. 
For a perfect intervention that targets a cluster in the causal ordering graph, 
one can similarly read off its non-effects and generic effects from the causal 
ordering graph \citep[Theorem 23, ][]{Blom2020}.

\begin{figure}[ht]
\begin{subfigure}[b]{0.5\linewidth}
  \centering
  \begin{tikzpicture}[scale=0.75,every node/.style={transform shape}]  
  \GraphInit[vstyle=Normal]
  \Vertex[L=$v_1$]{v1}
  \EA[unit=1.25,L=$v_2$](v1){v2}
  \SO[unit=1, L=$f_1$](v1){f1}
  \SO[unit=1, L=$f_2$](v2){f2}
  \tikzset{EdgeStyle/.style = {-}}
  \Edge(v1)(f1)
  \Edge(v2)(f2)
  \Edge(v1)(f2)
  \end{tikzpicture}
  \caption{Bipartite graph.}
  \label{fig:example bipartite}
\end{subfigure}%
\begin{subfigure}[b]{0.5\linewidth}
  \centering
  \begin{tikzpicture}[scale=0.75,every node/.style={transform shape}]  
  \GraphInit[vstyle=Normal]
  
  \Vertex[L=$w_1$, style=dashed]{w1}
  \begin{scope}[VertexStyle/.append style = {minimum size = 4pt, 
    inner sep = 0pt,
    color=black}]
  \SO[unit=1., L=$p_1$, Lpos=270, LabelOut](w1){p1}
  \end{scope}
  \EA[unit=1.25, L=$v_1$](w1){v1}
  \EA[unit=1.25,L=$v_2$](v1){v2}
  \SO[unit=1, L=$f_1$](v1){f1}
  \SO[unit=1, L=$f_2$](v2){f2}
  \EA[unit=1.25, L=$w_2$, style=dashed](v2){w2}
  \begin{scope}[VertexStyle/.append style = {minimum size = 4pt, 
    inner sep = 0pt,
    color=black}]
  \SO[unit=1, L=$p_2$, Lpos=270, LabelOut](w2){p2}
  \end{scope}
  
  \node[draw=black, fit=(v1) (f1), inner sep=0.1cm ]{};
  \node[draw=black, fit=(v2) (f2), inner sep=0.1cm ]{};
  
  \draw[EdgeStyle, style={->}](v1) to (1.975,0);
  \draw[EdgeStyle, style={->}](w1) to (0.725,0);
  \draw[EdgeStyle, style={->}](p1) to (0.725,-1);
  \draw[EdgeStyle, style={->}](w2) to (3.025,0);
  \draw[EdgeStyle, style={->}](p2) to (3.025,-1);    
  \end{tikzpicture}
  \caption{Causal ordering graph.}
  \label{fig:example causal ordering}
\end{subfigure}

\medskip
\begin{subfigure}[b]{\linewidth}
  \centering
  \begin{tikzpicture}[scale=0.75,every node/.style={transform shape}]  
  \GraphInit[vstyle=Normal]
  
  \Vertex[L=$w_1$, style=dashed]{w1}
  \EA[unit=1.25, L=$v_1$](w1){v1}
  \EA[unit=1.25,L=$v_2$](v1){v2}
  \EA[unit=1.25, L=$w_2$, style=dashed](v2){w2}
  
  \draw[EdgeStyle, style={->}](v1) to (v2);
  \draw[EdgeStyle, style={->}](w1) to (v1);
  \draw[EdgeStyle, style={->}](w2) to (v2);
  \end{tikzpicture}
  \caption{Markov ordering graph.}
  \label{fig:example markov ordering}
\end{subfigure}
\caption{The bipartite graph in Figure~\ref{fig:example bipartite} is a compact representation of the model in Example~\ref{ex:model equations as bipartite graph}. The corresponding causal ordering graph and Markov ordering graph are given in Figures \ref{fig:example causal ordering} and \ref{fig:example markov ordering} respectively. Exogenous variables are denoted by dashed circles and parameters by black dots.}
\end{figure}
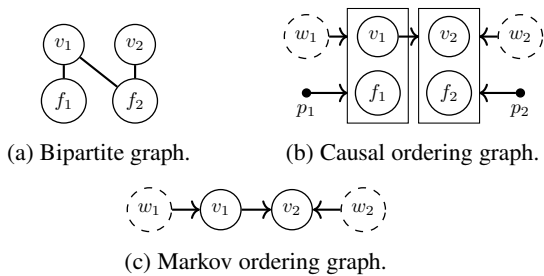

\subsection{Markov Ordering Graph}
\label{sec:introduction:markov ordering graph}

The causal ordering graph $\mathrm{CO}(\BG)=\tuple{\V,\E}$ of model equations
$F$ with endogenous variables $V$, exogenous random variables $W$, 
parameters $P$, and bipartite graph $\BG$ can be used to construct the
\emph{Markov ordering graph}, which is a DAG $\mathrm{MO}(\BG)=\tuple{V\cup
W,E}$, with $(x\to y)\in E$ if and only if $(x\to \mathrm{cl}(y))\in
\mathcal{E}$. The Markov ordering graph for the model equations in
Example~\ref{ex:model equations as bipartite graph} is given in
Figure~\ref{fig:example markov ordering}. It has been shown that, under the
assumption of unique solvability w.r.t.\ the causal ordering graph,
d-separations in the Markov ordering graph imply conditional independences
between the corresponding variables \citep{Blom2020}. Henceforth, we will
assume that the probability distribution of the solution $(X_v)_{v\in V}$ to a
set of model equations is faithful to the Markov ordering graph. In other
words, each conditional independence in the distribution implies a d-separation
in the Markov ordering graph. Under the assumption that data is generated from
such a model, some causal discovery algorithms, such as the PC algorithm
\citep{Spirtes2000}, aim to construct the Markov equivalence class of the
Markov ordering graph. In this work, we will specifically focus on feedback
models for which the Markov ordering graph of the equilibrium distribution, and
consequently the output of many causal discovery algorithms, does not have a
straightforward causal interpretation.

\section{Causal Ordering for a Viral Infection Model}
\label{sec:causal ordering for a viral infection model}

This work was inspired by a viral infection model discussed by \citet{Boer2012}, who
showed through explicit calculations that the predictions of the model are not
robust under addition of an immune response. This sheds doubt on the correct
interpretation of variables and parameters in the model. For many systems it is
intrinsically difficult to study their behavior in detail. The use of
simplified mathematical models that capture key characteristics aids in the
analysis of certain properties of the system. The hope is that the
explanations inferred from model equations are legitimate accounts of the true
underlying system \citep{Boer2012}. In reality, a modeler must take into
account that the outcome of these studies may be contingent on the specifics of
the model design. Here, we demonstrate how causal ordering can be used as a
scalable tool to assess the robustness of model predictions without requiring
explicit calculations.

\subsection{Viral Infection without Immune Response}
\label{sec:causal ordering for a viral infection model:without}

Let $U_\sigma$ be a production term for target cells, $d_T$ the death rate for target cells, $U_f$ the fraction of successful infections, and $U_\delta$ the death rate of productively infected cells. Define $\beta=\frac{bp}{c}$, where $b$ is the infection rate, $p$ the amount of virus produced per infected cell, and $c$ the clearance rate of viral particles. The following first-order differential equations describe how the amount of target cells $X_T(t)$ and the amount of infected cells $X_I(t)$ evolve over time \citep{Boer2012}:
\begin{align}
\label{eq:simple T}
\dot{X}_T(t) &= U_\sigma - d_T X_T(t) - \beta X_T(t) X_I(t), \\
\label{eq:simple I}
\dot{X}_I(t) &= (U_f\beta X_T(t) - U_\delta) X_I(t).
\end{align}
Suppose that we want to use this simple viral infection model to explain why the \emph{set-point viral load} (i.e.\ the total amount of virus circulating in the bloodstream) of chronically infected HIV-patients differs by several orders of magnitude, as \citet{Boer2012} does. To analyze this problem we look at the equilibrium equations that are implied by equations \eqref{eq:simple T} and \eqref{eq:simple I}:\footnote{Since we are only interested in strictly positive solutions we removed $X_I$ from the equilibrium equation $f_I:\,\,(U_f\beta X_T - U_\delta ) X_I = 0$ to obtain $f_I^+$.}
\begin{align}
f_T:\qquad& U_\sigma - d_T X_T - \beta X_T X_I = 0, \label{eq:eq T}\\
f^+_I:\qquad& U_f\beta X_T - U_\delta = 0. \label{eq:eq I}
\end{align}
Throughout the remainder of this work we will use this \emph{natural labelling} of equilibrium equations, where the equation derived from the derivative $\dot{X}_i(t)$ is labelled $f_i$. For first-order differential equations that are written in canonical form, $\dot{X}_i(t)=g_i(X(t))$, the natural labelling always exists.

Suppose that $U_\sigma$, $U_f$ and $U_\delta$ are independent exogenous random variables taking values in $\mathbb{R}_{>0}$ and $d_T$, $\beta$ are strictly positive parameters. The associated bipartite graph, causal ordering graph, and Markov ordering graph are given in Figure~\ref{fig:viral infection}. The causal ordering graph tells us that soft interventions targeting $U_\sigma$, $U_f$, $U_\delta$, $d_T$, or $\beta$ generically have an effect on the equilibrium distribution of the amount of infected cells $X_I$. From here on, we say that the causal ordering graph of a model predicts the \emph{generic presence} or \emph{absence} of causal effects. The Markov ordering graph shows that $v_T$ and $w_{\sigma}$ are d-separated. This implies that the amount of target cells $X_T$ should be independent of the production rate $U_{\sigma}$ when the system is at equilibrium. Henceforth, we will say that the Markov ordering graph predicts the \emph{generic presence} or \emph{absence} of conditional dependences.

\begin{figure}[t]
  \centering
  \subcaptionbox{Bipartite graph.\label{fig:viral infection:bipartite graph}}{%
    \centering
    \begin{tikzpicture}[scale=0.75,every node/.style={transform shape}]
    \GraphInit[vstyle=Normal]
    \Vertex[L=$v_T$]{vT} 
    \EA[unit=1,L=$v_I$](vT){vI}
    \SO[unit=1.1,L=$f_T$](vT){fT}
    \SO[unit=1.1,L=$f^+_I$](vI){fI}
    \tikzset{EdgeStyle/.style = {-}}
    \Edge(fI)(vT)
    \Edge(fT)(vI)
    \Edge(fT)(vT)
    \end{tikzpicture}\qquad\qquad
    }
    \subcaptionbox{Causal ordering graph.\label{fig:viral infection:causal ordering graph}}{%
    \centering
    \begin{tikzpicture}[scale=0.75,every node/.style={transform shape}]
    \GraphInit[vstyle=Normal]
    \Vertex[L=$v_T$]{vT} 
    \EA[unit=2.5,L=$v_I$](vT){vI}
    \SO[unit=1,L=$f^+_I$](vT){fI}
    \SO[unit=1,L=$f_T$](vI){fT}
    \node[draw=black, fit=(vT) (fI), inner sep=0.1cm ]{};
    \node[draw=black, fit=(vI) (fT), inner sep=0.1cm ]{};
    
    \Vertex[x=4,y=0.0, L=$w_{\sigma}$, style=dashed]{s}
    \Vertex[x=-1.5,y=0, L=$w_f$, style=dashed]{f}
    \Vertex[x=-1.5,y=-1, L=$w_\delta$, style=dashed]{d}
    
    \begin{scope}[VertexStyle/.append style = {minimum size = 4pt, 
      inner sep = 0pt,
      color=black}]
    \Vertex[x=1.25,y=-0.3, Lpos=270, LabelOut, L=$d_T$]{dt}
    \Vertex[x=1.25,y=-1.2, Lpos=270, LabelOut, L=$\beta$]{b}
    \end{scope}
    
    \draw[EdgeStyle, style={->}](s) to (3.12,0.0);
    \draw[EdgeStyle, style={->}](dt) to (1.88,-0.3);
    \draw[EdgeStyle, style={->}](b) to (1.88,-1.2);
    
    \draw[EdgeStyle, style={->}](vT) to (1.88,0.0);
    \draw[EdgeStyle, style={->}](b) to (0.67,-1.2);
    
    \draw[EdgeStyle, style={->}](f) to (-0.67,0.0);
    \draw[EdgeStyle, style={->}](d) to (-0.67,-1.0);
    \end{tikzpicture}
    }

    \medskip
    \subcaptionbox{Markov ordering graph.\label{fig:viral infection:markov ordering graph}}{%
    \centering
    \begin{tikzpicture}[scale=0.75,every node/.style={transform shape}]
    \GraphInit[vstyle=Normal]
    \Vertex[L=$v_T$]{vT}
    \EA[unit=1.25,L=$v_I$](vT){vI}
    \EA[unit=1.25,L=$w_{\sigma}$, style=dashed](vI){s}
    \SO[unit=1.25, L=$w_{\delta}$, style=dashed](vT){d}
    
    \Vertex[x=-1.25,y=0.0, L=$w_{f}$, style=dashed]{f}

    \tikzset{EdgeStyle/.style = {->}}
    \Edge(d)(vT)
    \Edge(vT)(vI)
    \Edge(s)(vI)
    \Edge(f)(vT)
    \end{tikzpicture}
    }
\caption{Graphical representations of the viral infection model in equations \eqref{eq:eq T} and \eqref{eq:eq I}. Vertices $v_i$ and $w_j$ correspond to variables $X_i$ and $U_j$, respectively. The causal ordering graph represents generic effects of interventions. The d-separations in Figure~\ref{fig:viral infection:markov ordering graph} imply conditional independences.}
\label{fig:viral infection}
\end{figure}
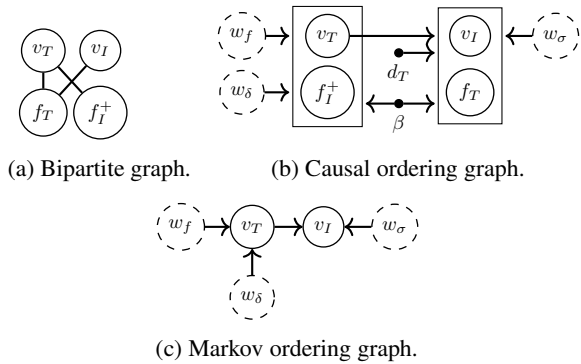

\subsection{Viral Infection with a Single Immune Response}
\label{sec:causal ordering for a viral infection model:single}

The viral infection model in equations \eqref{eq:simple T} and \eqref{eq:simple I} can be extended with a simple immune response $X_E(t)$ by adding the following dynamic and static equations:
\begin{align}
\label{eq:simple E}
\dot{X}_E(t) &= (U_a X_I(t) - d_E) X_E(t), \\
\label{eq:simple delta}
X_\delta(t) &= d_I + U_k X_E(t),
\end{align}
where $U_a$ is an activation rate, $d_E$ and $d_I$ are turnover rates and $U_k$ is a mass-action killing rate \citep{Boer2012}. Note that the exogenous random variable $U_\delta$ is now treated as an endogenous variable $X_{\delta}(t)$ instead. We derive the following equilibrium equations, using the natural labelling for equation \eqref{eq:simple E}:\footnote{Analogous to changing $f_I$ to $f_I^+$ for strictly positive solutions, we will look at $f_E^+$ instead of $f_E$.}
\begin{align}
\label{eq:eq E}
f_E^+: \qquad & U_a X_I - d_E = 0, \\
\label{eq:delta}
f_{\delta}: \qquad & X_\delta - d_I - U_k X_E = 0,
\end{align}
Henceforth, we will call the addition of equations $F_+$ to $F$ a \emph{model
extension}. Notice that when two sets of equations are combined, there may
exist variables that were exogenous in the submodel (i.e.\ the original model)
but that are endogenous within the whole model (i.e.\ the extended model).
Generally, equations $F_+$ may contain endogenous variables in $V$ and
exogenous variables in $W$ but they may also contain additional endogenous
variables $V_+$, additional exogenous variables $W_+$ and additional parameters $P_+$.
Parameters and exogenous random variables that appear in equations $F$ can appear as
endogenous variables in $V_+$ and in the extended model $F_{\mathrm{ext}}=F\cup
F_+$. In that case, these variables are no longer considered to be parameters
or exogenous variables within the extended model.

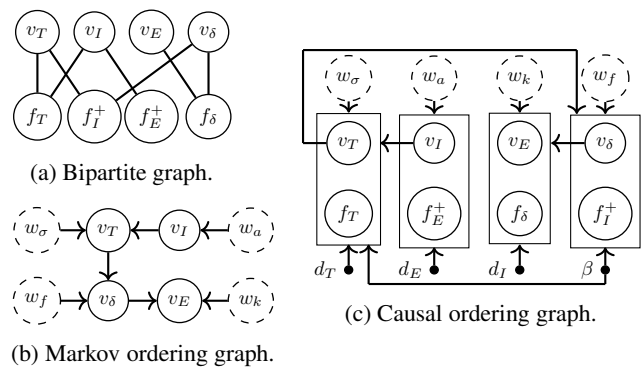
\begin{figure}[ht]
  \begin{minipage}{0.45\linewidth}
  \subcaptionbox{Bipartite graph.\label{fig:extended viral infection:bipartite graph}}{%
    \centering
    \begin{tikzpicture}[scale=0.75,every node/.style={transform shape}]
    \GraphInit[vstyle=Normal]
    \Vertex[L=$v_T$]{vT} 
    \EA[unit=1,L=$v_I$](vT){vI}
    \EA[unit=1,L=$v_E$](vI){vE}
    \EA[unit=1,L=$v_{\delta}$](vE){vd}
    \SO[unit=1.5,L=$f_T$](vT){fT}
    \SO[unit=1.5,L=$f_I^+$](vI){fI}
    \SO[unit=1.5,L=$f_E^+$](vE){fE}
    \SO[unit=1.5,L=$f_{\delta}$](vd){fd}
    \tikzset{EdgeStyle/.style = {-}}
    \Edge(fI)(vT)
    \Edge(fI)(vd)
    \Edge(fT)(vI)
    \Edge(fT)(vT)
    \Edge(fE)(vI)
    \Edge(fd)(vE)
    \Edge(fd)(vd)
    \end{tikzpicture}
    }

    \medskip
  \subcaptionbox{Markov ordering graph.\label{fig:extended viral infection:markov ordering graph}}{%
    \centering
    \begin{tikzpicture}[scale=0.75,every node/.style={transform shape}]
    \GraphInit[vstyle=Normal]
    \Vertex[L=$v_T$]{vT}
    \EA[unit=1.25,L=$v_I$](vT){vI}
    \SO[unit=1.25, L=$v_{\delta}$](vT){d}
    \SO[unit=1.25, L=$v_E$](vI){vE}
    
    \Vertex[x=-1.25, y=0.0 ,L=$w_{\sigma}$, style=dashed]{s}
    \Vertex[x=2.5, y=0.0 ,L=$w_{a}$, style=dashed]{a}
    \Vertex[x=2.5, y=-1.25 ,L=$w_{k}$, style=dashed]{k}
    \Vertex[x=-1.25, y=-1.25 ,L=$w_{f}$, style=dashed]{f}
    
    \tikzset{EdgeStyle/.style = {->}}
    \Edge(vT)(d)
    \Edge(vI)(vT)
    \Edge(d)(vE)
    \Edge(s)(vT)
    \Edge(a)(vI)
    \Edge(k)(vE)
    \Edge(f)(d)
    \end{tikzpicture}
    }
\end{minipage}
  \begin{minipage}{0.45\linewidth}
    \subcaptionbox{Causal ordering graph.\label{fig:extended viral infection:causal ordering graph}}{%
    \centering
    \begin{tikzpicture}[scale=0.75,every node/.style={transform shape}]   
    \GraphInit[vstyle=Normal]
    \Vertex[L=$v_T$]{vT} 
    \EA[unit=1.5,L=$v_I$](vT){vI}
    \EA[unit=1.5,L=$v_E$](vI){vE}
    \EA[unit=1.5,L=$v_{\delta}$](vE){vd}
    \SO[unit=1.25,L=$f_T$](vT){fT}
    \SO[unit=1.25,L=$f_E^+$](vI){fE}
    \SO[unit=1.25,L=$f_{\delta}$](vE){fd}
    \SO[unit=1.25,L=$f_I^+$](vd){fI}
    \node[draw=black, fit=(vT) (fT), inner sep=0.1cm ]{};
    \node[draw=black, fit=(vI) (fE), inner sep=0.1cm ]{};
    \node[draw=black, fit=(vE) (fd), inner sep=0.1cm ]{};
    \node[draw=black, fit=(vd) (fI), inner sep=0.1cm ]{};
    
    \Vertex[x=0.0,y=1.15, style=dashed, L=$w_{\sigma}$]{s}
    \Vertex[x=1.5,y=1.15, style=dashed, L=$w_{a}$]{a}
    \Vertex[x=3.0,y=1.15, style=dashed, L=$w_{k}$]{k}
    \Vertex[x=4.5,y=1.15, style=dashed, L=$w_{f}$]{f}
    
    \begin{scope}[VertexStyle/.append style = {minimum size = 4pt, 
      inner sep = 0pt,
      color=black}]
    \Vertex[x=0.0,y=-2.25, Lpos=180, LabelOut, L=$d_T$]{dt}
    \Vertex[x=1.5,y=-2.25, Lpos=180, LabelOut, L=$d_E$]{de}
    \Vertex[x=3.0,y=-2.25, LabelOut, Lpos=180, L=$d_I$]{di}
    \Vertex[x=4.5,y=-2.25, LabelOut, Lpos=180, L=$\beta$]{b}
    \end{scope}
    
    \draw[EdgeStyle, style={->}](s) to (0.0,0.54);
    \draw[EdgeStyle, style={->}](a) to (1.5,0.515);
    \draw[EdgeStyle, style={->}](k) to (3.0,0.54);
    \draw[EdgeStyle, style={->}](f) to (4.5,0.515);
    
    \draw[EdgeStyle, style={->}](dt) to (0.0,-1.8);
    \draw[EdgeStyle, style={->}](de) to (1.5,-1.85);
    \draw[EdgeStyle, style={->}](di) to (3.0,-1.8);
    \draw[EdgeStyle, style={->}](b) to (4.5,-1.85);
    
    \draw[EdgeStyle, style={->}](vI) to (0.55,0);
    \draw[EdgeStyle, style={->}](vd) to (3.55,0);

    \draw[EdgeStyle, style={-}](b) to (4.5,-2.5);
    \draw[EdgeStyle, style={-}](4.5,-2.5) to (0.35,-2.5);
    \draw[EdgeStyle, style={->}](0.35,-2.5) to (0.35,-1.8);
    
    \draw[EdgeStyle, style={-}](vT) to (-0.8,0.0);
    \draw[EdgeStyle, style={-}](-0.8,0.0) to (-0.8,1.65);
    \draw[EdgeStyle, style={-}](-0.8,1.65) to (4.0, 1.65);
    \draw[EdgeStyle, style={->}](4.0, 1.65) to (4.0, 0.515);    
    \end{tikzpicture} 
    }
\end{minipage}
  \caption{Graphical representations of the viral infection model with a single immune response. The presence or absence of causal relations and d-connections implied by the graphs in Figure~\ref{fig:viral infection} are not preserved if a single immune response is added.}
  \label{fig:extended viral infection}
\end{figure}

Suppose that $U_a$ and $U_k$ are independent exogenous random variables taking
values in $\mathbb{R}_{>0}$ and $d_E, d_I$ are parameters taking value in
$\mathbb{R}_{>0}$. The bipartite graph, causal ordering graph, and Markov
ordering graph associated with equations \eqref{eq:eq T}, \eqref{eq:eq I},
\eqref{eq:eq E}, and \eqref{eq:delta} (with $X_\delta$ replacing $U_\delta$)
are given in Figure~\ref{fig:extended viral infection}. 

The causal ordering graph in Figure~\ref{fig:extended viral infection:causal ordering graph}
predicts a causal effect of $U_{\sigma}$ and $d_T$ on $X_T$ but not on $X_I$. 
By comparing with the predictions of the causal ordering graph in
Figure~\ref{fig:viral infection:causal ordering graph} (where we saw that
soft interventions targeting $U_\sigma$ and $d_T$ generically do have an effect on $X_I$), we find that effects of
interventions targeting $U_{\sigma}$ and $d_T$ are not robust under the model
extension. 

The Markov ordering graph of the extended model
in Figure~\ref{fig:extended viral infection:markov ordering graph} shows that
$w_{\sigma}$ is d-connected to $v_T$, and hence $U_\sigma$ and $X_T$ will be
dependent at equilibrium for most parameter choices. On the other hand, in
the Markov ordering graph of the viral infection model without immune response
(Figure~\ref{fig:viral infection:markov ordering graph}), 
$w_\sigma$ and $v_T$ are d-separated, and hence, $U_\sigma$ and $X_T$ will be independent 
at equilibrium for any parameter choice according to the viral infection model without immune response.
Therefore, the independence between $U_{\sigma}$ and $X_T$ is not robust under the model extension.

The systematic graphical procedure followed here easily leads to the same causal conclusions as \citet{Boer2012} obtained by explicitly solving the equilibrium equations. In addition, it leads to predictions regarding the conditional (in)dependences in the equilibrium distribution.

\subsection{Viral infection with Multiple Immune Responses}
\label{sec:causal ordering for a viral infection model:multiple}

The following static and dynamical equations describe multiple immune responses:
\begin{align}
  \begin{split}
\dot{X}_{E_i}(t) &= \frac{p_E X_{E_i}(t) U_{a_i} X_I(t)}{h+ X_{E_i}(t) + U_{a_i} X_I(t)} - d_E X_{E_i}(t), \\
    & \quad\qquad\qquad\qquad\qquad\qquad i=1,2,\ldots,n 
  \end{split}\\
X_{\delta}(t) &= d_I + U_k \sum_{i=1}^{n} U_{a_i} X_{E_i}(t),
\end{align}
where there are $n$ immune responses, $U_{a_i}$ is the avidity of immune
response $i$, $p_E$ is the maximum division rate, and $h$ is a saturation
constant \citep{Boer2012}. For $n=2$ we can derive equilibrium equations
$f_{E_1}$, $f_{E_2}$, and $f_{\delta}$ using the natural labelling as we did
for the equilibrium equations in the previous section. Together with the
equilibrium equations \eqref{eq:eq T} and \eqref{eq:eq I} (with $X_\delta$
replacing $U_\delta$) for the viral infection model this is another extended
model. The bipartite graph of this extended model is given in
Figure~\ref{fig:multiple immune responses:bipartite graph}, while the causal
ordering graph can be found in Figure~\ref{fig:multiple immune responses:causal
ordering graph}. By comparing the directed paths in this causal ordering graph
with that of the original viral infection model (i.e.\ the model without an
immune response) in Figure~\ref{fig:viral infection:causal ordering graph}, it
can be seen that the predicted presence of causal relations is preserved under
extension of the model with multiple immune responses, while the predicted
absence of causal relations is not. Similarly, by comparing d-separations in
the Markov ordering graphs in Figure~\ref{fig:viral infection:markov ordering
graph} with those in Figure~\ref{fig:multiple immune responses:markov ordering
graph}, we find that predicted conditional dependences are preserved under the
extensions, while the predicted conditional independences are not.

\begin{figure}[h]
  \centering
  \begin{subfigure}[b]{\linewidth}
    \centering
    \begin{tikzpicture}[scale=0.7,every node/.style={transform shape}]
    \GraphInit[vstyle=Normal]
    \Vertex[L=$v_T$]{vT} 
    \EA[unit=1.25,L=$v_I$](vT){vI}
    \EA[unit=1.25,L=$v_{E_1}$](vI){vE1}
    \EA[unit=1.25,L=$v_{E_2}$](vE1){vE2}
    \EA[unit=1.25,L=$v_{\delta}$](vE2){vd}
    \SO[unit=1.75,L=$f_T$](vT){fT}
    \SO[unit=1.75,L=$f_I^+$](vI){fI}
    \SO[unit=1.75,L=$f_{E_1}$](vE1){fE1}
    \SO[unit=1.75,L=$f_{E_2}$](vE2){fE2}
    \SO[unit=1.75,L=$f_{\delta}$](vd){fd}
    \tikzset{EdgeStyle/.style = {-}}
    \Edge(fI)(vT)
    \Edge(fI)(vd)
    \Edge(fT)(vI)
    \Edge(fT)(vT)
    \Edge(fE1)(vI)
    \Edge(fE1)(vE1)
    \Edge(fE2)(vI)
    \Edge(fE2)(vE2)
    \Edge(fd)(vE1)
    \Edge(fd)(vE2)
    \Edge(fd)(vd)
    \end{tikzpicture}
    \caption{Bipartite graph.}
    \label{fig:multiple immune responses:bipartite graph}
  \end{subfigure}

  \medskip
  \begin{subfigure}[b]{\linewidth}
    \centering
    \begin{tikzpicture}[scale=0.75,every node/.style={transform shape}]  
    \GraphInit[vstyle=Normal]
    \Vertex[L=$v_T$]{vT} 
    \EA[unit=1,L=$v_I$](vT){vI}
    \EA[unit=1,L=$v_{E_1}$](vI){vE1}
    \EA[unit=1,L=$v_{E_2}$](vE1){vE2}
    \EA[unit=1,L=$v_{\delta}$](vE2){vd}
    \SO[unit=1,L=$f_I^+$](vT){fI}
    \SO[unit=1,L=$f_T$](vI){fT}
    \SO[unit=1,L=$f_{E_1}$](vE1){fE1}
    \SO[unit=1,L=$f_{E_2}$](vE2){fE2}
    \SO[unit=1,L=$f_{\delta}$](vd){fd}
    \node[draw=black, fit=(vT) (fI) (vI) (fT) (vE1) (fE1) (vd) (fd), inner sep=0.1cm ]{};
    
    \Vertex[x=0,y=1.3, L=$w_{\sigma}$, style=dashed]{s}
    \Vertex[x=1,y=1.3, L=$w_{k}$, style=dashed]{k}
    \Vertex[x=2,y=1.3, L=$w_{a_1}$, style=dashed]{a1}
    \Vertex[x=3,y=1.3, L=$w_{a_2}$, style=dashed]{a2}
    \Vertex[x=4,y=1.3, L=$w_f$, style=dashed]{f}
    
    \begin{scope}[VertexStyle/.append style = {minimum size = 4pt, 
      inner sep = 0pt,
      color=black}]
    \Vertex[x=-1,y=0.25, Lpos=180, LabelOut, L=$d_T$]{dt}
    \Vertex[x=-1,y=-0.5, Lpos=180, LabelOut, L=$\beta$]{b}
    \Vertex[x=-1,y=-1.25, Lpos=180,LabelOut, L=$h$]{h}
    
    \Vertex[x=5,y=0.25,Lpos=0, LabelOut, L=$p_E$]{pe}
    \Vertex[x=5,y=-0.5,Lpos=0,LabelOut, L=$d_E$]{de}
    \Vertex[x=5,y=-1.25,Lpos=0, LabelOut, L=$d_I$]{di}
    \end{scope}
    \draw[EdgeStyle, style={->}](s) to (0,0.6);
    \draw[EdgeStyle, style={->}](k) to (1,0.6);
    \draw[EdgeStyle, style={->}](a1) to (2,0.6);
    \draw[EdgeStyle, style={->}](a2) to (3,0.6);
    \draw[EdgeStyle, style={->}](f) to (4,0.6);
    
    \draw[EdgeStyle, style={->}](dt) to (-0.6,0.25);
    \draw[EdgeStyle, style={->}](b) to (-0.6,-0.5);
    \draw[EdgeStyle, style={->}](h) to (-0.6,-1.25);
    
    \draw[EdgeStyle, style={->}](pe) to (4.54,0.25);
    \draw[EdgeStyle, style={->}](de) to (4.54,-0.5);
    \draw[EdgeStyle, style={->}](di) to (4.54, -1.25);
    
    \end{tikzpicture}
   \caption{Causal ordering graph.}
   \label{fig:multiple immune responses:causal ordering graph}
  \end{subfigure}

  \medskip
  \begin{subfigure}[b]{\linewidth}
    \centering
    \begin{tikzpicture}[scale=0.75,every node/.style={transform shape}] 
    \GraphInit[vstyle=Normal]
    \Vertex[L=$v_T$]{vT} 
    \EA[unit=1.5,L=$v_I$](vT){vI}
    \EA[unit=1.5,L=$v_{E_1}$](vI){vE1}
    \EA[unit=1.5,L=$v_{E_2}$](vE1){vE2}
    \EA[unit=1.5,L=$v_{\delta}$](vE2){vd}
    
    \Vertex[x=0.0,y=1.5, L=$w_{\sigma}$, style=dashed]{s}
    \Vertex[x=1.5,y=-1.5, L=$w_{k}$, style=dashed]{k}
    \Vertex[x=3.0,y=1.5, L=$w_{a_1}$, style=dashed]{a1}
    \Vertex[x=4.5,y=-1.5, L=$w_{a_2}$, style=dashed]{a2}
    \Vertex[x=6.0,y=1.5, L=$w_f$, style=dashed]{f}
    
    \draw[EdgeStyle, style={->}](s) to (vT);
    \draw[EdgeStyle, style={->}](s) to (vI);
    \draw[EdgeStyle, style={->}](s) to (vE1);
    \draw[EdgeStyle, style={->}](s) to (vE2);
    \draw[EdgeStyle, style={->,bend left=6}](s) to (vd);
    
    \draw[EdgeStyle, style={->}](k) to (vT);
    \draw[EdgeStyle, style={->}](k) to (vI);
    \draw[EdgeStyle, style={->}](k) to (vE1);
    \draw[EdgeStyle, style={->}](k) to (vE2);
    \draw[EdgeStyle, style={->}](k) to (vd);
    
    \draw[EdgeStyle, style={->, bend right=8}](a1) to (vT);
    \draw[EdgeStyle, style={->}](a1) to (vI);
    \draw[EdgeStyle, style={->}](a1) to (vE1);
    \draw[EdgeStyle, style={->}](a1) to (vE2);
    \draw[EdgeStyle, style={->, bend left=8}](a1) to (vd);
    
    \draw[EdgeStyle, style={->}](a2) to (vT);
    \draw[EdgeStyle, style={->}](a2) to (vI);
    \draw[EdgeStyle, style={->}](a2) to (vE1);
    \draw[EdgeStyle, style={->}](a2) to (vE2);
    \draw[EdgeStyle, style={->}](a2) to (vd);
    
    \draw[EdgeStyle, style={->, bend right=6}](f) to (vT);
    \draw[EdgeStyle, style={->}](f) to (vI);
    \draw[EdgeStyle, style={->}](f) to (vE1);
    \draw[EdgeStyle, style={->}](f) to (vE2);
    \draw[EdgeStyle, style={->}](f) to (vd);
    \end{tikzpicture}
    \caption{Markov ordering graph.}
    \label{fig:multiple immune responses:markov ordering graph}
  \end{subfigure}
  \caption{Graphical representations of the viral infection model with multiple immune responses. The presence of causal relations and d-connections in Figure~\ref{fig:viral infection} is preserved.}
  \label{fig:multiple immune responses}
\end{figure}
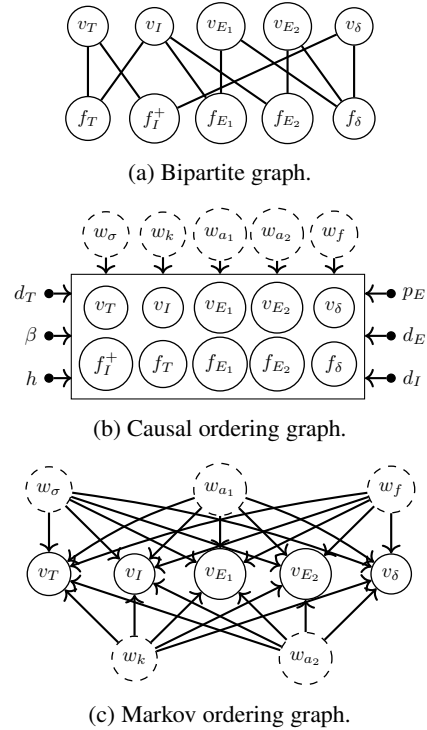

\subsection{Markov Ordering Graphs and Causal Interpretations}
\label{sec:causal ordering for a viral infection model:causal interpretation}

In \citep{Blom2020}, it was shown that the Markov ordering graph may not have a straightforward causal interpretation. Here, we illustrate for the viral infection models that
the Markov ordering graphs neither have a straightforward causal interpretation at equilibrium in terms of soft interventions targeting parameters or equations \emph{nor} in terms of perfect interventions on variables in the dynamical model. To see this, consider the Markov ordering graph in Figure~\ref{fig:extended viral infection:markov ordering graph} for the viral infection with a single immune response. The edge $(v_I\to v_T)$ cannot correspond to the effect of a soft intervention targeting $f_I^+$, because the causal ordering graph in Figure~\ref{fig:extended viral infection:causal ordering graph} shows that there is no such effect. Clearly, directed paths in the Markov ordering graph do not necessarily represent the effects of soft interventions. A natural way to model a perfect intervention targeting a variable in the Markov ordering graph is to replace the (differential) equation of that variable with an equation setting that variable equal to a certain value in the underlying dynamical model \citep{Mooij2013}. By explicitly solving equilibrium equations it is easy to check that replacing $f_E^+$ with an equation setting $X_E$ equal to a constant generically changes the equilibrium distributions of all four variables $X_E, X_\delta, X_T, X_I$. Since there are no directed paths from $v_{E}$ to any of $v_\delta, v_T, v_I$ in the Markov ordering graph in Figure~\ref{fig:extended viral infection:markov ordering graph}, the effects of this perfect intervention would not have been predicted by the Markov ordering graph, if it were interpreted causally. Therefore, contrary to the causal ordering graph, the Markov ordering graph does not have a causal interpretation in terms of soft or perfect interventions on the underlying dynamical model.

\section{Robust Causal Predictions under Model Extensions}
\label{sec:robust causal predictions}

One way to gauge the robustness of model predictions is to check to what extent they depend on the model design. The example of a viral infection with different immune responses in the previous section indicates that qualitative causal predictions entailed by the causal ordering graph of a mathematical model may strongly depend on the particulars of the model. Both the implied presence or absence of causal relations at equilibrium and the implied presence or absence of conditional independences at equilibrium may change under certain model extensions. Under what conditions are these qualitative predictions preserved under model extensions? In this section, we characterize a large class of model extensions under which qualitative equilibrium predictions are preserved.

Theorem~\ref{thm:preserved causal predictions} gives a sufficient condition on
model extensions under which the predicted generic presence of causal relations
and predicted generic presence of conditional dependences at equilibrium is
preserved. The proof is given in Section~\ref{sec:supplement:proofs} of the Supplementary Material.

\begin{restatable}{theorem}{preservedcausalpredictions}
  \label{thm:preserved causal predictions}
  Consider model equations $F$ containing endogenous variables $V$ with bipartite graph $\BG$. Suppose $F$ is extended with equations $F_+$ containing endogenous variables in $V\cup V_+$, where $V_+$ contains endogenous variables that are added by the model extension (which may include parameters or exogenous variables that appear in $F$ and become endogenous in the extended model). Let $\BG_{\mathrm{ext}}$ be the bipartite graph associated with $F_{\mathrm{ext}}=F\cup F_+$ and $V_{\mathrm{ext}}=V\cup V_+$, and $\BG_+$ the bipartite graph associated with the extension $F_+$ and $V_+$, where variables in $V$ appearing in $F^+$ are treated as exogenous variables (i.e.\ they are not added as vertices in $\BG_+$). If $\BG$ and $\BG_+$ both have a perfect matching then:
  \begin{enumerate}
    \item $\BG_{\mathrm{ext}}$ has a perfect matching, \label{con 1: thm1}
    \item ancestral relations in $\mathrm{CO}(\BG)$ are also present in $\mathrm{CO}(\BG_{\mathrm{ext}})$, \label{con 2: thm1}
    \item d-connections in $\mathrm{MO}(\BG)$ are also present in $\mathrm{MO}(\BG_{\mathrm{ext}})$. \label{con 3: thm1}
  \end{enumerate}
\end{restatable}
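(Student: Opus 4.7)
The plan is to construct a perfect matching of $\BG_{\mathrm{ext}}$ by taking the disjoint union of the two given matchings, observe that this makes $\mathcal{G}(\BG, M)$ sit inside $\mathcal{G}(\BG_{\mathrm{ext}}, M \cup M_+)$ as an induced subgraph on $V \cup F$, and then conclude that the extended cluster partition is a coarsening of the original one. Conclusions (2) and (3) will then follow by lifting an original directed path, respectively d-connecting path, into the extended graph.

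\emph{Step 1 (perfect matching).} Since variables of $V$ that appear in $F_+$ are treated as exogenous in $\BG_+$, the matching $M$ bijects $V$ with $F$ and $M_+$ bijects $V_+$ with $F_+$. These vertex sets are disjoint in $\BG_{\mathrm{ext}}$, so $M \cup M_+$ is a perfect matching of $\BG_{\mathrm{ext}}$, establishing~(1).

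\emph{Step 2 (coarsening of clusters).} Edge orientations in step~\ref{alg:step 1} of the algorithm depend only on whether an edge is in the matching, and every edge of $\BG$ between $V$ and $F$ is an edge of $\BG_{\mathrm{ext}}$ with the same matched status under $M \cup M_+$. Hence $\mathcal{G}(\BG, M)$ is the induced subgraph of $\mathcal{G}(\BG_{\mathrm{ext}}, M \cup M_+)$ on $V \cup F$, and every directed cycle of $\mathcal{G}(\BG, M)$ persists in $\mathcal{G}(\BG_{\mathrm{ext}}, M \cup M_+)$. So each strongly connected component of the former lies in one of the latter, and because $\mathrm{CO}$ is independent of the matching choice, every cluster of $\mathrm{CO}(\BG)$ is contained in a cluster of $\mathrm{CO}(\BG_{\mathrm{ext}})$.

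\emph{Step 3 (ancestral relations).} Take a directed path $C_1 = \mathrm{cl}(x), \ldots, C_k = \mathrm{cl}(y)$ in $\mathrm{CO}(\BG)$ with witnesses $z_i \in C_i$ arising from edges $(z_i \to f_{i+1}) \in \mathcal{G}(\BG, M)$ for some $f_{i+1} \in C_{i+1}$. These edges persist in $\mathcal{G}(\BG_{\mathrm{ext}}, M \cup M_+)$. Let $D_i$ denote the extended cluster containing $C_i$ and remove consecutive repeats to get a strictly changing sequence $D'_1 = \mathrm{cl}^{\mathrm{ext}}(x), \ldots, D'_m = \mathrm{cl}^{\mathrm{ext}}(y)$; each surviving transition $D'_j \neq D'_{j+1}$ is witnessed by one of the persistent edges, yielding the desired directed path in $\mathrm{CO}(\BG_{\mathrm{ext}})$, which proves~(2).

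\emph{Step 4 (d-connections).} Given a d-connecting path $u = a_0 - a_1 - \cdots - a_n = v$ in $\mathrm{MO}(\BG)$ relative to a conditioning set $Z$, I lift it into $\mathrm{MO}(\BG_{\mathrm{ext}})$. An edge $a_i - a_{i+1}$ whose endpoints remain in distinct extended clusters is already an edge of $\mathrm{MO}(\BG_{\mathrm{ext}})$ and can be reused. If the endpoints are absorbed into a common extended cluster $D$, then $a_i$ and $a_{i+1}$ share the set of parents of $D$ in $\mathrm{MO}(\BG_{\mathrm{ext}})$; this lets one splice in a short detour through $D$ or through a parent of $D$, checking that the collider/non-collider status at the splice points matches the original so that the resulting path remains d-connecting given $Z$.

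The main obstacle I anticipate is this last step. Persistence of surviving edges is automatic, but the technical core is a case analysis showing that every merged edge admits a detour compatible with $Z$. I expect this will exploit the fact that $M_+$ matches $V_+$ with $F_+$, giving access to alternative paths through the added material $V_+ \cup W_+ \cup F_+$ whenever a naive two-edge detour happens to be blocked by the conditioning set.
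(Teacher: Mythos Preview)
Your Steps~1--3 are correct and line up with the paper almost verbatim: the paper also takes $M_{\mathrm{ext}}=M\cup M_+$, observes that $\mathcal{G}(\BG,M)$ is a subgraph of $\mathcal{G}(\BG_{\mathrm{ext}},M_{\mathrm{ext}})$, deduces that each cluster of $\mathrm{CO}(\BG)$ is contained in a cluster of $\mathrm{CO}(\BG_{\mathrm{ext}})$, and concludes~(2) from that.

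Where you diverge is Step~4, and this is a genuine gap rather than a stylistic difference. The paper does \emph{not} attempt to lift a d-connecting walk inside the Markov ordering graph. Instead it changes the arena: it works in the (possibly cyclic) augmented directed graph $\mathcal{G}(\BG,M)$ with exogenous vertices attached, and invokes $\sigma$-separation. Because $\sigma$-connection is monotone under adding vertices and edges, the subgraph inclusion $\mathcal{G}(\BG,M)\subseteq\mathcal{G}(\BG_{\mathrm{ext}},M_{\mathrm{ext}})$ immediately preserves all $\sigma$-connections. A separate correspondence (Lemma~43 of \citet{Blom2020} together with Corollary~2.8.4 of \citet{Forre2017}) then translates $\sigma$-connections in the augmented $\mathcal{G}$ into d-connections in the associated $\mathrm{MO}$ graph, on both sides. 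The combinatorial case analysis you anticipate is thus bypassed entirely.

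Your proposed direct route is problematic precisely where you flag it. When an edge $a_i\to a_{i+1}$ of $\mathrm{MO}(\BG)$ is absorbed into a single extended cluster $D$, the only detours available \emph{within} $\mathrm{MO}(\BG_{\mathrm{ext}})$ go through common parents of $D$, and every such detour has a non-collider at that parent; if the conditioning set $Z$ happens to contain all parents of $D$ that lie in $V\cup W$, the naive splice is blocked. Your hope that ``alternative paths through the added material $V_+\cup W_+\cup F_+$'' will rescue this is misplaced at the level of $\mathrm{MO}$: the vertices $F_+$ do not even appear in $\mathrm{MO}(\BG_{\mathrm{ext}})$, and vertices of $V_+$ in $D$ have exactly the same parent set as $a_i,a_{i+1}$, so routing through them gains nothing. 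The extra connectivity you need is only visible in the cyclic graph $\mathcal{G}(\BG_{\mathrm{ext}},M_{\mathrm{ext}})$, which is exactly why the paper works there with $\sigma$-separation rather than in $\mathrm{MO}$ with d-separation.
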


This result characterizes a large set of extensions under which the implied
causal effects and conditional dependences of a model are preserved. Consider
again the equilibrium behavior of the viral infection models in
Section~\ref{sec:causal ordering for a viral infection model}. We already
showed explicitly that the extension of the viral infection model with multiple
immune responses preserved the predicted presence of causal relations and
conditional dependences, but with the help of Theorem~\ref{thm:preserved causal
predictions} we only would have needed to check whether the bipartite graph in
Figure~\ref{fig:model extension:multiple responses} has a perfect matching to
arrive at the same conclusion. The bipartite graph for the extension with a
single immune response in Figure~\ref{fig:model extension:single response} does
not have a perfect matching and hence the conditions of
Theorem~\ref{thm:preserved causal predictions} do not hold. Recall that this
model extension did not preserve the predicted presence of causal relations.

\begin{figure}[t]
  \begin{subfigure}[b]{0.4\linewidth}
    \centering
    \begin{tikzpicture}[scale=0.7,every node/.style={transform shape}] 
    \GraphInit[vstyle=Normal]
    \Vertex[L=$v_T$,style=dotted]{vT} 
    \EA[unit=1.25,L=$v_I$,style=dotted](vT){vI}
    \EA[unit=1.25,L=$v_E$](vI){vE} 
    \EA[unit=1.25,L=$v_{\delta}$](vE){vd}
    \SO[unit=1.75,L=$f_T$,style=dotted](vT){fT}
    \SO[unit=1.75,L=$f_I^+$,style=dotted](vI){fI}
    \SO[unit=1.75,L=$f_E^+$](vE){fE}
    \SO[unit=1.75,L=$f_{\delta}$](vd){fd}
    \tikzset{EdgeStyle/.style = {-}}
    \Edge[style=dotted](fI)(vT)
    \Edge[style=dotted](fI)(vd)
    \Edge[style=dotted](fT)(vI)
    \Edge[style=dotted](fT)(vT)
    \Edge[style=dotted](fE)(vI)
    \Edge(fd)(vE)
    \Edge(fd)(vd)
    \end{tikzpicture}  
    \caption{Single response.}
    \label{fig:model extension:single response}
  \end{subfigure}\hfill
  \begin{subfigure}[b]{0.55\linewidth}
    \centering
    \begin{tikzpicture}[scale=0.7,every node/.style={transform shape}]
    \GraphInit[vstyle=Normal]
    \Vertex[L=$v_T$,style=dotted]{vT} 
    \EA[unit=1.25,L=$v_I$,style=dotted](vT){vI}
    \EA[unit=1.25,L=$v_{E_1}$](vI){vE1}
    \EA[unit=1.25,L=$v_{E_2}$](vE1){vE2}
    \EA[unit=1.25,L=$v_{\delta}$](vE2){vd}
    \SO[unit=1.75,L=$f_T$,style=dotted](vT){fT}
    \SO[unit=1.75,L=$f_I^+$,style=dotted](vI){fI}
    \SO[unit=1.75,L=$f_{E_1}$](vE1){fE1}
    \SO[unit=1.75,L=$f_{E_2}$](vE2){fE2}
    \SO[unit=1.75,L=$f_{\delta}$](vd){fd}
    \tikzset{EdgeStyle/.style = {-}}
    \Edge[style=dotted](fI)(vT)
    \Edge[style=dotted](fI)(vd)
    \Edge[style=dotted](fT)(vI)
    \Edge[style=dotted](fT)(vT)
    \Edge[style=dotted](fE1)(vI)
    \Edge(fE1)(vE1)
    \Edge[style=dotted](fE2)(vI)
    \Edge(fE2)(vE2)
    \Edge(fd)(vE1)
    \Edge(fd)(vE2)
    \Edge(fd)(vd)
    \end{tikzpicture}
    \caption{Multiple responses.}
    \label{fig:model extension:multiple responses}
  \end{subfigure}
  \caption{The solid lines indicate the bipartite graphs $\BG_+$ associated with the single immune response extension (\ref{fig:model extension:single response}) and the multiple immune response extension (\ref{fig:model extension:multiple responses}). Combined with the dotted lines, one obtains the bipartite graphs $\BG_{\mathrm{ext}}$ for the complete models.}
  \label{fig:model extension}
\end{figure}
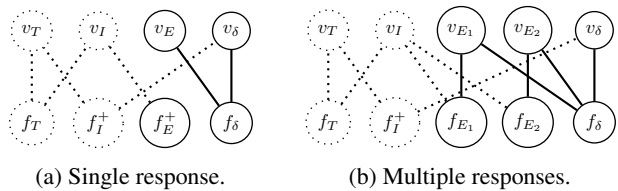

The theorem below gives a stronger condition under which (conditional) independence relations and the absence of causal relations that are implied by a model are also predicted by the extended model. The proof is provided in the supplement.

\begin{restatable}{theorem}{preservedcis}
  \label{thm:preserveration of conditional independence relations}
  Let $F$, $F_+$, $F_{\mathrm{ext}}$, $V$, $V_+$, $V_{\mathrm{ext}}$, $\BG$, $\BG_+$, and $\BG_{\mathrm{ext}}$ be as in Theorem~\ref{thm:preserved causal predictions}. If $\BG$ and $\BG_+$ both have perfect matchings and no vertex in $V_+$ is adjacent to a vertex in $F$ in $\BG_{\mathrm{ext}}$ then:\footnote{A vertex in $V_+$ is considered adjacent to $F$ in $\BG_{\mathrm{ext}}$ if it corresponds with one of the exogenous random variables or parameters in $F$ that become endogenous in the model extension.}
  \begin{enumerate}
    \item ancestral relations absent in $\mathrm{CO}(\BG)$ are also absent in $\mathrm{CO}(\BG_{\mathrm{ext}})$, \label{res:2}
    \item d-connections absent in $\mathrm{MO}(\BG)$ are also absent in $\mathrm{MO}(\BG_{\mathrm{ext}})$. \label{res:3}
  \end{enumerate}
\end{restatable}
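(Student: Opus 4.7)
The plan is to exploit the extra hypothesis that no vertex in $V_+$ is adjacent to any $f\in F$ in $\BG_{\mathrm{ext}}$: this forces the original part $V\cup F$ and the extension part $V_+\cup F_+$ of the causal ordering structure to interact only in one direction, so that $\mathrm{CO}(\BG)$ and $\mathrm{MO}(\BG)$ sit essentially unchanged inside $\mathrm{CO}(\BG_{\mathrm{ext}})$ and $\mathrm{MO}(\BG_{\mathrm{ext}})$.

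First I would pick perfect matchings $M$ of $\BG$ and $M_+$ of $\BG_+$ and observe that $M_{\mathrm{ext}}:=M\cup M_+$ is a perfect matching of $\BG_{\mathrm{ext}}$: the hypothesis (together with the convention in the footnote) ensures $V\cap V_+=\emptyset$ and $F\cap F_+=\emptyset$, while $M$ saturates $V\cup F$ and $M_+$ saturates $V_+\cup F_+$. Orienting $\BG_{\mathrm{ext}}$ with respect to $M_{\mathrm{ext}}$ as in step~\ref{alg:step 1}, every edge $(v-f)$ with $v\in V$ and $f\in F_+$ is unmatched (since $v$ is already matched inside $F$) and therefore becomes $v\to f$. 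Combined with the hypothesis, which forbids any edge between $V_+$ and $F$ altogether, this shows that every directed edge of $\G(\BG_{\mathrm{ext}},M_{\mathrm{ext}})$ connecting the two parts runs from $V\cup F$ into $V_+\cup F_+$.

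From this one-way structure it follows that every strongly connected component of $\G(\BG_{\mathrm{ext}},M_{\mathrm{ext}})$ lies entirely in $V\cup F$, in which case it coincides with an SCC of $\G(\BG,M)$, or entirely in $V_+\cup F_+$, in which case it coincides with an SCC of $\G(\BG_+,M_+)$. Hence the clusters of $\mathrm{CO}(\BG_{\mathrm{ext}})$ split into \emph{original} clusters contained in $V\cup F$ and \emph{extension} clusters contained in $V_+\cup F_+$, and the subgraph of $\mathrm{CO}(\BG_{\mathrm{ext}})$ induced by the original clusters together with the exogenous and parameter singletons of $F$ is literally $\mathrm{CO}(\BG)$. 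Moreover, any edge $(v\to\mathrm{cl}(f))$ added in step~\ref{alg:step 3} whose target is an original cluster has $f\in F$, so by the hypothesis $v\in V\cup W\cup P$; consequently no extension vertex ever points into an original cluster. Part~\ref{res:2} follows immediately: a directed path in $\mathrm{CO}(\BG_{\mathrm{ext}})$ ending in a vertex $y$ of the original model can only traverse original clusters, and therefore corresponds to a directed path in $\mathrm{CO}(\BG)$; contrapositively, absent ancestral relations are preserved.

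For Part~\ref{res:3} the same one-way structure, transported to the Markov ordering graph, shows that no edge of $\mathrm{MO}(\BG_{\mathrm{ext}})$ points from $V_+\cup W_+$ into $V\cup W$ and that the subgraph of $\mathrm{MO}(\BG_{\mathrm{ext}})$ induced on $V\cup W$ equals $\mathrm{MO}(\BG)$. Consequently, for any $A,B,C\subseteq V\cup W$, the set $\mathrm{An}_{\mathrm{MO}(\BG_{\mathrm{ext}})}(A\cup B\cup C)$ lies inside $V\cup W$ and is closed under taking parents, so the moralized ancestral graph computed in $\mathrm{MO}(\BG_{\mathrm{ext}})$ agrees with the one computed in $\mathrm{MO}(\BG)$. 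The standard ancestral-moral-graph criterion for d-separation then gives equivalence of d-separations of $A$ and $B$ given $C$, and therefore equivalence of d-connections, which is Part~\ref{res:3}. I expect the main obstacle to be careful bookkeeping: one has to check that exogenous variables and parameters of $F$ which reappear in $F_+$ without becoming endogenous are treated consistently as singletons in both graphs, and that the hypothesis really enforces $V\cap V_+=\emptyset$; once this is done, the one-way cross-edge observation does essentially all the work.
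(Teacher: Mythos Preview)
Your proof is correct and follows essentially the same route as the paper: both arguments pick $M_{\mathrm{ext}}=M\cup M_+$, observe that all cross-edges in $\G(\BG_{\mathrm{ext}},M_{\mathrm{ext}})$ are oriented from $V\cup F$ into $V_+\cup F_+$, and conclude that $\mathrm{CO}(\BG)$ and $\mathrm{MO}(\BG)$ sit inside the extended graphs as induced subgraphs with no incoming edges from the extension part. Your treatment of Part~\ref{res:3} via the ancestral moral graph criterion is more explicit than the paper's (which simply states the induced-subgraph conclusion and leaves the d-separation consequence implicit), but the underlying idea is identical.
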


Together with Theorem~\ref{thm:preserved causal predictions}, this result characterizes a large class of model extensions under which all
qualitative model predictions are preserved. Consider again the equilibrium
models for the viral infection in Section~\ref{sec:causal ordering for a viral
infection model}. The bipartite graph for the extension with a single immune
response in Figure~\ref{fig:model extension:single response} does not have a perfect matching. In the bipartite graph
associated with the viral infection model with multiple immune responses 
in Figure~\ref{fig:multiple immune responses:bipartite graph}, the
additional endogenous variable $v_{\delta}$ is adjacent to $f_I$. Neither of
the model extensions satisfies the conditions of
Theorem~\ref{thm:preserveration of conditional independence relations}. We
already demonstrated that neither of the model extensions preserves all
qualitative model predictions. An example of a model extension that does
satisfy the conditions in Theorem~\ref{thm:preserved causal predictions} and
\ref{thm:preserveration of conditional independence relations} is an acyclic
structural causal model that is extended with another acyclic structural causal
model such that the additional variables are non-ancestors of the original
ones. Together, Theorem~\ref{thm:preserved causal predictions} and
\ref{thm:preserveration of conditional independence relations}, can be used to
understand when the causal and Markov properties of a composite system can be 
understood by studying the corresponding properties of its components.

\section{Selection of Model Extensions}

So far, we have considered methods to assess the robustness of qualitative model predictions. In this section we will show how this idea results in novel opportunities regarding causal discovery. In particular, if we assume that the systems that we observe are part of a larger partially observed system, then we can use the methods in this paper to reason about causal mechanisms of unobserved variables. Consider, for example, the viral infection model for which we have demonstrated that extensions with different immune responses imply different (conditional) independences between variables in the original model. The Markov ordering graphs in Figures \ref{fig:viral infection:markov ordering graph}, \ref{fig:extended viral infection:markov ordering graph}, and \ref{fig:multiple immune responses:markov ordering graph} imply the following (in)dependences:

\medskip
\centerline{\begin{tabular}{ll}
  Viral infection model & (In)dependences \\
  \hline
  without immune response & $U_{\sigma} \indep X_T$, $U_{\sigma}\dep X_I$ \\
  with single immune response & $U_{\sigma} \dep X_T$, $U_{\sigma}\indep X_I$ \\
  with multiple immune responses & $U_{\sigma} \dep X_T$, $U_{\sigma}\dep X_I$ \\
\end{tabular}}

Given a model for variables $X_T$ and $X_I$ only, we can reject certain model
extensions based on the observed (conditional) independences for variables 
$X_T$, $X_I$, and $U_{\sigma}$ in data sampled from the distribution of the
combined system---provided that we assume faithfulness and the correctness of 
the model of the original subsystem.
Using this holistic modelling approach, we can reason about properties of
an unknown model extension without observing the new mechanisms or variables.
In the remainder of this section, we further discuss how this idea can be
applied to equilibrium data of dynamical systems.

\subsection{Reasoning about Self-regulating Variables}

We say that a variable in a set of first-order differential equations in
canonical form is \emph{self-regulating} if it can be solved uniquely from the
equilibrium equation that is constructed from its derivative. For example,
in the system of first-order ODEs (\ref{eq:simple T}--\ref{eq:simple I}),
$X_T$ is self-regulating if $\beta X_I + d_T \ne 0$, whereas $X_I$ is not
self-regulating.

For models in which every variable is self-regulating there exists a perfect matching where
each variable $v_i$ is matched to its associated equilibrium equation $f_i$
according to the natural labelling, for more details see
Lemma~\ref{lemma:selfregulating perfect matching} in the
supplement. Interestingly, the Markov ordering graph for the
equilibrium equations of such a model always has a causal interpretation. By
construction of the causal ordering graph from the bipartite graph and the
perfect matching provided by the natural labelling, we know that a vertex $v_i$
always appears in a cluster with $f_i$ in the causal ordering graph. The
presence or absence of directed paths in the Markov ordering graph can then
easily be associated with the presence or absence of directed paths in the
causal ordering graph. Consequently, the Markov ordering graph can be
interpreted in terms of both soft interventions targeting equations and perfect
interventions that set variables equal to a constant by replacement of the
associated dynamical and equilibrium equations.\footnote{Dynamical systems
with only self-regulating variables were also considered in \citep{Mooij2013},
where it was shown that their equilibria can be modelled as structural causal
models without self-cycles.}

For models in which every variable is self-regulating, it follows immediately 
from Theorem~\ref{thm:preserved
causal predictions} that the presence of ancestral relations and d-connections
is robust under dynamical model extensions in which each variable is
self-regulating, as is stated more formally in
Corollary~\ref{cor:selfregulating} below.

\begin{restatable}{corollary}{selfregulating}
\label{cor:selfregulating}
Consider a first-order dynamical model in canonical form for endogenous variables $V$ and an extension consisting of canonical first-order differential equations for additional endogenous variables $V_+$. Let $F$ and $F_{\mathrm{ext}}= F\cup F_+$ be the equilibrium equations of the original and extended model respectively. If all variables in $V\cup V_+$ are self-regulating, then statements \ref{con 2: thm1} and \ref{con 3: thm1} of Theorem~\ref{thm:preserved causal predictions} hold.
\end{restatable}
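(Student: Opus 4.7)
The plan is to reduce the corollary to a direct invocation of Theorem~\ref{thm:preserved causal predictions}, so the work is entirely in verifying its hypotheses. Concretely, I need to check that under the self-regulation assumption, both $\BG$ and $\BG_+$ (with $V$-variables appearing in $F_+$ treated as exogenous) admit perfect matchings. Once this is done, items \ref{con 2: thm1} and \ref{con 3: thm1} of Theorem~\ref{thm:preserved causal predictions} transfer verbatim to the corollary's conclusion.

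To exhibit the perfect matchings, I would use the \emph{natural labelling}: since the original dynamical model is in canonical form, each $v_i\in V$ has a distinguished equilibrium equation $f_i \in F$ derived from $\dot X_i(t) = g_i(X(t))$, and similarly each $v_i\in V_+$ has a distinguished $f_i\in F_+$. By the definition of self-regulation, the equation $f_i$ can be solved uniquely for $X_i$, which in particular requires that $X_i$ actually appear in $f_i$, so $(v_i - f_i)$ is an edge of the associated bipartite graph. Hence $M = \{(v_i - f_i) : v_i\in V\}$ covers every vertex of $V$ and $F$ exactly once, giving a perfect matching of $\BG$; the same construction with $V_+$ and $F_+$ gives a perfect matching of $\BG_+$, where the variables of $V$ that happen to occur in $F_+$ are treated as exogenous and thus do not need to be matched. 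This argument is essentially the content of Lemma~\ref{lemma:selfregulating perfect matching} in the supplement, which I would cite.

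With both perfect matchings in hand, the hypotheses of Theorem~\ref{thm:preserved causal predictions} are met, and conclusions \ref{con 2: thm1} and \ref{con 3: thm1} give exactly what the corollary asserts. The main (and essentially only) subtlety is making sure that the definition of self-regulation---uniqueness of the solution for $X_i$ from $f_i$---indeed forces the edge $(v_i - f_i)$ to exist in the bipartite graph, so that the natural labelling is a valid matching; this is the step I would treat carefully by appealing to the cited lemma rather than re-deriving it. Everything else is immediate from Theorem~\ref{thm:preserved causal predictions}.
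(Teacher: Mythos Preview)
Your proposal is correct and matches the paper's intended argument essentially verbatim: the paper states that the corollary follows immediately from Theorem~\ref{thm:preserved causal predictions} once one knows (via Lemma~\ref{lemma:selfregulating perfect matching}) that self-regulation of all variables yields the natural-labelling perfect matchings for $\BG$ and $\BG_+$. Your care about the edge $(v_i - f_i)$ existing and about treating $V$-variables in $F_+$ as exogenous in $\BG_+$ is exactly the right bookkeeping.
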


Corollary~\ref{cor:selfregulating} characterizes a class of models under which
certain qualitative predictions for the equilibrium distribution are robust, but the
result can also be interpreted from a different angle. Suppose that we have
equilibrium data that is generated by an extended dynamical model with
equilibrium equations $F_{\mathrm{ext}}$, but we only have a \emph{partial}
model consisting of equations in $F$ for a subset $V\subseteq
V_{\mathrm{ext}}=V\cup V_+$ of variables that appear in $F_{\mathrm{ext}}=F\cup F_+$. 
If we would find conditional independences between variables in $V$ that do not
correspond to d-separations in the Markov ordering graph of the partial model,
this does not necessarily mean that the model equations are wrong. It could
also be the case, for example, that we are wrong to assume that the system can
be studied in a reductionist manner and that the model should be extended.
Furthermore, under the assumption that data is generated from the equilibrium
distribution of a dynamical model, Corollary~\ref{cor:selfregulating} tells us
that conditional independences in the data that are not predicted by the
equations of a partial model imply the presence of variables that are not
self-regulating, if we assume faithfulness. This shows that, given a model for
a subsystem, we can reason about the properties of unobserved and unknown
variables in the whole system. 
We will showcase an example for this type of reasoning in Section~\ref{sec:signaling_cascade_model}.

\subsection{Reasoning about Feedback Loops}

We say that an extension of a dynamical model \emph{introduces a new feedback
loop with the original dynamical model} when there is feedback in the extended
dynamical model that involves variables in both the original model and the
model extension. To make this definition more precise, consider the set
$E_{\mathrm{nat}}$ of edges $(v_i-f_i)$ that are associated with the natural
labelling of the equilibrium equations of the extended dynamical model. The
feedback loops in the dynamical model coincide with cycles in the directed
graph $\G(\BG_{\mathrm{nat}}, M_{\mathrm{nat}})$ that is obtained by applying
step~\ref{alg:step 1} of the causal ordering algorithm to the bipartite graph
$\BG_{\mathrm{nat}}=\tuple{V_{\mathrm{ext}},F_{\mathrm{ext}},E_{\mathrm{ext}}\cup
E_{\mathrm{nat}}}$ using the perfect matching $M_{\mathrm{nat}} =
E_{\mathrm{nat}}$.\footnote{Note that a feedback loop in the dynamical model
does not imply a feedback loop in the equilibrium equations as well. For
example, there is feedback in the dynamical equations \eqref{eq:simple T},
\eqref{eq:simple I}, but there is no feedback in the causal ordering graph of
the equilibrium equations in Figure~\ref{fig:viral infection:causal ordering
graph} nor in the directed graph that is constructed in step~\ref{alg:step 1}
of the causal ordering algorithm.} The following theorem can be used to
reason about the presence of partially unobserved feedback loops given a model
and observations for a subsystem.

\begin{restatable}{theorem}{feedbackloop}
\label{thm:feedback}
Consider a first-order dynamical model in canonical form for endogenous variables $V$ and an extension consisting of canonical first-order differential equations for additional endogenous variables $V_+$. Let $F$ and $F_{\mathrm{ext}}= F\cup F_+$ be the equilibrium equations of the original and extended model respectively. Let $\BG=\tuple{V,F,E}$ be the bipartite graph associated with $F$ and $\BG_{\mathrm{ext}}=\tuple{V_{\mathrm{ext}}, F_{\mathrm{ext}}, E_{\mathrm{ext}}}$ the bipartite graph associated with $F_{\mathrm{ext}}$. Assume that $\BG$ and $\BG_{\mathrm{ext}}$ both have perfect matchings. If the model extension does not introduce a new feedback loop with the original dynamical model, then d-connections in $\mathrm{MO}(\BG)$ are also present in $\mathrm{MO}(\BG_{\mathrm{ext}})$.
\end{restatable}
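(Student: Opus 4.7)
The plan is to reduce Theorem~\ref{thm:feedback} to statement~\ref{con 3: thm1} of Theorem~\ref{thm:preserved causal predictions}, by showing that under the given hypotheses the bipartite graph $\BG_+$ (defined as in Theorem~\ref{thm:preserved causal predictions}, with variables of $V$ appearing in $F_+$ treated as exogenous) admits a perfect matching. Once this is established, the conclusion follows immediately by applying Theorem~\ref{thm:preserved causal predictions} to $F$ and $F_+$.

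The key claim is: if $\BG$ and $\BG_{\mathrm{ext}}$ both admit perfect matchings and the extension introduces no new feedback loop with the original dynamical model, then $\BG_+$ admits a perfect matching. I would prove this by contrapositive. Suppose $\BG_+$ has no perfect matching, let $M_{\mathrm{ext}}$ be any perfect matching of $\BG_{\mathrm{ext}}$, and let $M_{\mathrm{nat}}=E_{\mathrm{nat}}$ be the natural labelling. By construction, $M_{\mathrm{nat}}$ is a perfect matching of the auxiliary bipartite graph $\BG_{\mathrm{nat}}=\tuple{V_{\mathrm{ext}},F_{\mathrm{ext}},E_{\mathrm{ext}}\cup E_{\mathrm{nat}}}$, and since $\BG_{\mathrm{ext}}$ is a subgraph of $\BG_{\mathrm{nat}}$, the matching $M_{\mathrm{ext}}$ is also a perfect matching of $\BG_{\mathrm{nat}}$. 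By Hall's theorem applied to $\BG_+$, not every vertex of $V_+$ can be matched inside $F_+$, so there exists some $v\in V_+$ that $M_{\mathrm{ext}}$ matches to an equation $f\in F$. The natural labelling, on the other hand, matches $v$ to its natural partner $f_v\in F_+$.

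Next, examine the symmetric difference $M_{\mathrm{ext}}\triangle M_{\mathrm{nat}}$, which decomposes into a disjoint union of alternating cycles in $\BG_{\mathrm{nat}}$. The vertex $v$ lies on one such cycle, because it is matched differently by the two matchings. Walking from $v$ along its $M_{\mathrm{ext}}$-edge reaches $f\in F$, and continuing along the $M_{\mathrm{nat}}$-edge incident to $f$ reaches its natural-label partner $v_f\in V$. Hence this alternating cycle contains vertices from both $V_+$ and $V$. Orienting its edges following step~\ref{alg:step 1} of the causal ordering algorithm---$M_{\mathrm{nat}}$-edges as $f\to v$ and the remaining edges as $v\to f$---yields a directed cycle in $\G(\BG_{\mathrm{nat}},M_{\mathrm{nat}})$ that contains vertices from both $V$ and $V_+$. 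By definition this is a new feedback loop with the original dynamical model, contradicting the hypothesis.

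The main obstacle is the careful bookkeeping between $\BG_{\mathrm{ext}}$ and the enlarged graph $\BG_{\mathrm{nat}}$: the natural labelling may prescribe edges $(v_i-f_i)$ that are absent from $\BG_{\mathrm{ext}}$ whenever $v_i$ is not self-regulating, so $M_{\mathrm{nat}}$ is not in general a perfect matching of $\BG_{\mathrm{ext}}$ itself. Performing the symmetric-difference argument inside $\BG_{\mathrm{nat}}$ circumvents this issue, and because the feedback-loop hypothesis is phrased precisely in terms of $\G(\BG_{\mathrm{nat}},M_{\mathrm{nat}})$, the directed cycle produced by the argument is exactly the object that the hypothesis rules out, completing the contradiction.
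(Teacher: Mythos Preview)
Your proposal is correct and follows essentially the same route as the paper: both arguments reduce to Theorem~\ref{thm:preserved causal predictions} by showing that a failure of $\BG_+$ to admit a perfect matching forces any perfect matching $M_{\mathrm{ext}}$ of $\BG_{\mathrm{ext}}$ to cross-match between $(V,F)$ and $(V_+,F_+)$, and then compare $M_{\mathrm{ext}}$ with $M_{\mathrm{nat}}$ inside $\BG_{\mathrm{nat}}$ to produce a directed cycle in $\G(\BG_{\mathrm{nat}},M_{\mathrm{nat}})$ meeting both $V$ and $V_+$. The only cosmetic difference is that the paper invokes its Lemma~\ref{lemma:uniqueness of directed graph} (the two oriented graphs differ only in the direction of cycles) for this last step, while you carry out the equivalent symmetric-difference/alternating-cycle argument directly; your appeal to Hall's theorem is harmless but unnecessary, since the cross-matching already follows from the simpler counting observation that $M_{\mathrm{ext}}$ restricted to $V_+\cup F_+$ cannot be a perfect matching of $\BG_+$.
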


Theorem~\ref{thm:feedback} characterizes a class of model extensions under
which certain qualitative model predictions are robust, but it also shows how we can
reason about the existence of unobserved feedback loops. To be more precise, it
shows that, given a submodel for a subsystem, the presence of conditional
independences that are not predicted by the submodel imply the existence of an
unobserved feedback loop, if we assume faithfulness. If, for example, we assume
that the viral infection model without an immune response is a submodel of the
system that is described by the strictly positive equilibrium solutions of the
viral infection model with a single immune response, then we would observe an
independence between $U_\sigma$ and $X_I$ that is not predicted by the model
equations of the submodel. Theorem~\ref{thm:feedback} would then imply
that there is an unobserved feedback loop. Indeed, it can be seen from
equations \eqref{eq:simple T}, \eqref{eq:simple I}, \eqref{eq:simple E},
\eqref{eq:simple delta} that there is an unobserved feedback loop from $X_I(t)$
to $X_E(t)$ to $X_\delta(t)$ and back to $X_I(t)$, while the Markov ordering
graphs in Figures \ref{fig:viral infection:markov ordering graph} and
\ref{fig:extended viral infection:markov ordering graph} imply that $U_\sigma$
and $X_I$ are dependent in the original model and independent in the extended
model. We consider the use of existing structure learning algorithms for the
detection of feedback loops in models with variables that are not
self-regulating from a combination of background knowledge and observational
equilibrium data to be an interesting topic for future work.


\subsection{Example: Signaling Cascade Model}\label{sec:signaling_cascade_model}

We will illustrate the ideas about detecting non-self-regulating variables and feedback loops by means of an example of a mathematical model for a dynamical system consisting of a signaling cascade of phosphorylated proteins.
The model is a simplified version of that of \citep{Shin2009}, where we omitted the feedback mechanism through RAF Kinase Inhibitor Protein (RKIP) \citep{BlomMooij_2101.11885}. 
The details of this model can be found in Section~\ref{sec:supplement:signaling_cascade_model} of the Supplementary Material.

We denote the concentrations of active (phosphorylated) RAS, RAF, MEK, and ERK proteins, respectively, by
$X_{s}(t)$, $X_{r}(t)$, $X_{m}(t)$, and $X_{e}(t)$, and denote by $I(t)$ an external stimulus or perturbation.
The dynamics is modeled by differential equations
\eqref{eq:mapk s},
\eqref{eq:mapk r},
\eqref{eq:mapk m}, and
\eqref{eq:mapk e approx}
in Section~\ref{sec:supplement:signaling_cascade_model} of the Supplementary Material. 
The full model consists of a signaling pathway that goes from $I(t)$ to $X_s(t)$ to $X_r(t)$ to $X_m(t)$ to $X_e(t)$, with negative feedback from $X_e(t)$ on $X_s(t)$.
At equilibrium, we assume the exogenous input signal $I(t) = I$ to have a constant (possibly random) value, and
let $f_s$, $f_r$, $f_m$, and $f_e$ represent the equilibrium equations \eqref{eq:mapk s eq}, \eqref{eq:mapk r eq}, \eqref{eq:mapk m eq}, and \eqref{eq:mapk e eq approx} respectively.

Suppose now that the system is only partially modelled by treating the ERK concentration $X_e(t)$ as a latent exogenous variable in the submodel for the RAS, RAF and MEK concentrations ($X_s(t)$, $X_r(t)$, and $X_m(t)$, respectively) defined by equations \eqref{eq:mapk s}, \eqref{eq:mapk r}, and \eqref{eq:mapk m}.
The complete model (including the differential equation \eqref{eq:mapk e approx} that models the dynamics of ERK) can then be seen as a model extension of the submodel for RAS, RAF and MEK.
Application of the causal ordering technique to the submodel (with $V = \{v_s, v_r, v_m\}$ and $F = \{f_s, f_r, f_m\}$) results in the Markov ordering graph in Figure~\ref{fig:protein pathway:submodel:markov ordering graph}. 
Assuming faithfulness, the d-connections in this graph indicate that the equilibrium distributions for $X_s$, $X_r$ and $X_m$ all depend on the input signal $I$.
Let us assume that we have observed data that is generated from the full model. 
The Markov ordering graph for the extended model (with $V^+ = \{v_e\}$, $F^+ = \{f_e\}$) is displayed in Figure~\ref{fig:protein pathway:extended model:markov ordering graph}, and implies that the equilibrium distributions for $X_s$, $X_r$ and $X_m$ are \emph{independent} of the input signal $I$. 
Thus, what appears to be a faithfulness violation from the submodel perspective is explained by the Markov property of the extended model. 
In this case, the holistic modeling approach that allows for feedback through additional unobserved endogenous variables is needed, while the more common reductionistic assumption of treating all unobserved causes as exogenous to the observed variables will fail.

According to Corollary~\ref{cor:selfregulating} and Theorem~\ref{thm:feedback}, the discrepancy between the observed and predicted conditional independences implies the presence of a non-selfregulating variable and an unobserved dynamical feedback loop (provided that we assume faithfulness). 
This is in agreement with the fact that the dynamic variable $X_e(t)$ is not self-regulating and that there is a feedback loop in the extended dynamical model. 
Remarkably, we can infer the presence of feedback in this way without explicitly modelling or even observing $X_e(t)$.

\begin{figure}[ht]
  \begin{subfigure}[b]{0.45\linewidth}
		\centering
		\begin{tikzpicture}[scale=0.75,every node/.style={transform shape}]
		\GraphInit[vstyle=Normal]
		\SetGraphUnit{1}
		\Vertex[L=$v_s$, x=1.1, y=0] {vS}
		\Vertex[L=$v_r$, x=2.2, y=0] {vR}
		\Vertex[L=$v_m$, x=3.3, y=0] {vM}
		\Vertex[L=$w_{s}$, x=1.1, y=-1.1, style={dashed}] {wS}
		\Vertex[L=$w_{r}$, x=2.2, y=-1.1, style={dashed}] {wR}
 		\Vertex[L=$w_{m}$, x=3.3, y=-1.1, style={dashed}] {wM}
		
		\begin{scope}[VertexStyle/.append style = {minimum size = 4pt, 
			inner sep = 0pt,
			color=black}]
		\Vertex[x=0.0, y=0.0, L=$I$, Lpos=180, LabelOut]{I}
		\end{scope}
    \draw[EdgeStyle, style={->}](I) to (vS);
    \draw[EdgeStyle, style={->}](wS) to (vS);
    \draw[EdgeStyle, style={->}](wR) to (vR);
    \draw[EdgeStyle, style={->}](wM) to (vM);
    \draw[EdgeStyle, style={->}](vR) to (vM);
    \draw[EdgeStyle, style={->}](vS) to (vR);
		\end{tikzpicture}
		\caption{$\mathrm{MO}(\BG)$.}
		\label{fig:protein pathway:submodel:markov ordering graph}
	\end{subfigure}
  \begin{subfigure}[b]{0.45\linewidth}
		\centering
		\begin{tikzpicture}[scale=0.75,every node/.style={transform shape}]
		\GraphInit[vstyle=Normal]
		\SetGraphUnit{1}
		\Vertex[L=$v_e$, x=0.0, y=0] {vE}
		\Vertex[L=$v_s$, x=1.1, y=0] {vS}
		\Vertex[L=$v_r$, x=2.2, y=0] {vR}
		\Vertex[L=$v_m$, x=3.3, y=0] {vM}
		\Vertex[L=$w_{s}$, x=0, y=-1.1, style={dashed}] {wS}
		\Vertex[L=$w_{r}$, x=1.1, y=-1.1, style={dashed}] {wR}
		\Vertex[L=$w_{m}$, x=2.2, y=-1.1, style={dashed}] {wM}
		\Vertex[L=$w_{e}$, x=3.3, y=-1.1, style={dashed}] {wE}
		
		\begin{scope}[VertexStyle/.append style = {minimum size = 4pt, 
			inner sep = 0pt,
			color=black}]
		\Vertex[x=-1.0, y=0.0, L=$I$, Lpos=180, LabelOut]{I}
		\end{scope}
		\draw[EdgeStyle, style={->}](I) to (vE);
		\draw[EdgeStyle, style={->}](wS) to (vE);
		\draw[EdgeStyle, style={->}](wR) to (vS);
		\draw[EdgeStyle, style={->}](wM) to (vR);
		\draw[EdgeStyle, style={->}](wE) to (vM);
		\draw[EdgeStyle, style={->}](vM) to (vR);
		\draw[EdgeStyle, style={->}](vR) to (vS);
		\draw[EdgeStyle, style={->}](vS) to (vE);
		\end{tikzpicture}
    \caption{$\mathrm{MO}(\BG_{\mathrm{ext}})$.}
		\label{fig:protein pathway:extended model:markov ordering graph}
	\end{subfigure}
  \caption{Markov ordering graphs for the partial model (left) and the full model (right) of the RAS-RAF-MEK-ERK protein signaling cascade model.}
\end{figure}
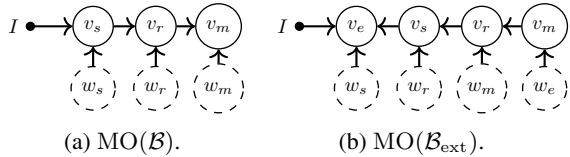

\section{Discussion and conclusion}
\label{sec:discussion}

In this work we revisited several models of viral infections and immune
responses. In our treatment of these models we closely followed the approach in
\citet{Boer2012} and therefore we only considered strictly positive solutions.
If we would have modelled all solutions then, for example, we would have
considered the equilibrium equation $f_I:(U_f\beta X_T - U_\delta)X_I = 0$
instead of $f_I^+$ in equation~\eqref{eq:eq I}. In that case, we would have
obtained the causal ordering graph in Figure~\ref{fig:all solutions} instead of
that in Figure~\ref{fig:viral infection:causal ordering graph}. Clearly, the
model predictions of the causal ordering graph for the positive solutions in
Figure~\ref{fig:viral infection:causal ordering graph} are more informative.
The choice of only modelling strictly positive solutions depends on the
application.

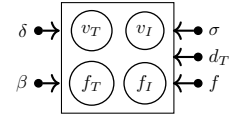
\begin{figure}[t]
	\centering
	\begin{tikzpicture}[scale=0.7,every node/.style={transform shape}] 
	\GraphInit[vstyle=Normal]
	\Vertex[L=$v_T$]{vT} 
	\EA[unit=1,L=$v_I$](vT){vI}
	\SO[unit=1,L=$f_T$](vT){fT}
	\SO[unit=1,L=$f_I$](vI){fI}
	\node[draw=black, fit=(vT) (fT) (vI) (fI), inner sep=0.1cm ]{};
	\begin{scope}[VertexStyle/.append style = {minimum size = 4pt, 
		inner sep = 0pt,
		color=black}]
	\Vertex[x=2,y=0.0, LabelOut, L=$\sigma$]{s}
	\Vertex[x=2,y=-0.5, LabelOut, L=$d_T$]{dt}
	\Vertex[x=2,y=-1.0, LabelOut, L=$f$]{f}
	\Vertex[x=-1.0,y=0.0, LabelOut, Lpos=180, L=$\delta$]{d}
	\Vertex[x=-1.0,y=-1.0, LabelOut, Lpos=180, L=$\beta$]{b}
	\end{scope}
	\draw[EdgeStyle, style={->}](s) to (1.525,0.0);
	\draw[EdgeStyle, style={->}](dt) to (1.525,-0.5);
	\draw[EdgeStyle, style={->}](f) to (1.525,-1.0);
	\draw[EdgeStyle, style={->}](d) to (-0.6,0.0);
	\draw[EdgeStyle, style={->}](b) to (-0.6,-1.0);
	\end{tikzpicture}
	\caption{Causal ordering graph for positive and non-positive solutions of the viral infection model.}
	\label{fig:all solutions}
\end{figure}

In many application domains mathematical models are used to predict the equilibrium behavior of complex systems. An important issue is that (causal and Markov) predictions may strongly depend on the specifics of the model design. We revisited an example of a viral infection model \citep{Boer2012}, in which implied causal relations and conditional independences change dramatically when equations, describing immune reactions, are added. Analysis of this behavior through explicit calculations is neither insightful nor scalable. We showed how the technique of causal ordering can be used to efficiently analyze the robustness of implied causal effects and conditional independences under certain solvability assumptions. Using key insights provided by this approach we characterized large classes of model extensions under which predicted causal relations and conditional independences are robust. We hope that the results presented in this paper provide a step towards bringing the world of causal modeling and reasoning closer to practical applications.

Our results for the characterization of the robustness of model extensions can
also be used to reason about the properties of models that are the combination
of two submodels. This way, we can study systems whose causal and Markov
properties can be understood in a reductionistic manner by considering the
properties of its parts. When the properties of the whole model differ from
those of its parts, a holistic modelling approach is required. For models
of the equilibrium distribution of dynamical systems, we proved that extensions
of dynamical models where each variable is self-regulating preserve the
predicted presence of causal effects and d-connections in the original model.
Based on those insights, we proposed a novel approach to model selection, where
information about conditional independences can be used in combination with
model equations to reason about possible model extensions or the presence of
feedback mechanisms. For dynamical models with feedback, the output of
structure learning algorithms does not always have a causal interpretation in
terms of soft or perfect interventions for the equilibrium distribution. We
have shown that in dynamical systems where each variable is self-regulating the
identifiable directed edges in the learned graph do express causal relations
between variables.

\begin{acknowledgements} 
We thank Johannes Textor for introducing us to De Boer's paper and for interesting discussions about the viral infection model. 
We are also grateful to an anonymous reviewer for pointing out some flaws in an earlier version of this manuscript. This work was supported by the ERC under the European Union's Horizon 2020 research and innovation programme (grant agreement 639466).
\end{acknowledgements}

\bibliography{blom_190}

\clearpage
\onecolumn
\appendix

\section{Supplementary Material}

\renewcommand{\theequation}{\arabic{equation}*}
\renewcommand{\thefigure}{\arabic{figure}*}
\renewcommand{\theHequation}{\arabic{equation}*}
\renewcommand{\theHfigure}{\arabic{figure}*}

This supplementary material contains the material that did not fit into the main paper because of space constraints.
A graphical illustration of the causal ordering algorithm applied to the equations of a cyclic model is provided in the first section. The second section contains more details on the signaling cascade model. The third section contains the proofs of the results in the main paper.

\subsection{Causal Ordering Algorithm Applied to a Cyclic Model}
\label{sec:supplement:example}

In this supplementary section we demonstrate how the causal ordering algorithm works on a set of equations for a cyclic model. The algorithm is also presented graphically. Consider the following equations for endogenous variables $\B{X}$ and exogenous random variables $\B{U}$:
\begin{align}
f_1:\qquad & g_1(X_{v_1}, U_{w_1}) = 0, \label{eq:f1}\\
f_2:\qquad & g_2(X_{v_2}, X_{v_1}, X_{v_4}, U_{w_2}) = 0,\\
f_3:\qquad & g_3(X_{v_3}, X_{v_2}, U_{w_3}) = 0,\\
f_4:\qquad & g_4(X_{v_4}, X_{v_3}, U_{w_4}) = 0,\\
f_5:\qquad & g_5(X_{v_5}, X_{v_4}, U_{w_5}) = 0. \label{eq:f5}
\end{align}

The associated bipartite graph in Figure~\ref{fig:supplement bipartite graph} consists of variable vertices $V=\{v_1,\ldots,v_5\}$ and equation vertices $F=\{f_1,\ldots,f_5\}$. There is an edge between a variable vertex and an equation vertex whenever that variable appears in the equation. The associated bipartite graph has exactly two perfect matchings:
\begin{align*}
M_1 = \{(v_1-f_1), (v_2-f_2), (v_3-f_3), (v_4-f_4), (v_5-f_5)\},\\
M_2 = \{(v_1-f_1), (v_2-f_3), (v_3-f_4), (v_4-f_2), (v_5-f_5)\}.
\end{align*}
Application of the first step of the causal ordering algorithm results either
in the directed graph in Figure~\ref{fig:supplement directed graph 1} or that
in Figure~\ref{fig:supplement directed graph 2}, depending on the choice of the
perfect matching. The segmentation of vertices into strongly connected
components, which takes place in the second step of the algorithm, results in
the clusters $\{v_1\}$, $\{f_1\}$, $\{v_2,v_3,v_4,f_2,f_3,f_4\}$, $\{v_5\}$,
and $\{f_5\}$. To construct the clusters of the causal ordering graph we add
$S_i\cup M(S_i)$ to a cluster set $\V$ for each $S_i$ in the segmentation. The
segmentation of vertices into strongly connected components is displayed in
Figures \ref{fig:supplement strongly connected components 1} and
\ref{fig:supplement strongly connected components 2}. Notice that the
segmentation in Figure~\ref{fig:supplement strongly connected components 1} is
the same as that in Figure~\ref{fig:supplement strongly connected components 2}.
It is known that the segmentation into strongly connected components is
unique (i.e.\ it does not depend on the choice of the perfect matching) \citep{Pothen1990,Blom2020}. 
The cluster set $\V$ for the causal ordering graph in Figure~\ref{fig:supplement causal ordering graph} is constructed by merging clusters in the segmented graph whenever two clusters contain vertices that are matched and by adding exogenous variables as singleton clusters. The edge set $\E$ for the causal ordering graph is obtained by adding edges $(v\to C)$ from an endogenous vertex $v$ to a cluster $C$, whenever $v\notin C$ and there is an edge from $v$ to $f\in C$ in the directed graph. We also add edges from exogenous vertices to clusters that contain equations in which the corresponding exogenous random variables appear.

\begin{figure}[ht]
\centering
\begin{subfigure}[b]{0.3\linewidth}
  \centering
  \begin{tikzpicture}[scale=0.75,every node/.style={transform shape}]
  \GraphInit[vstyle=Normal]
  \SetGraphUnit{1}
  \Vertex[L=$v_1$, x=0.0, y=0.0] {v1}
  \Vertex[L=$v_2$, x=1.0, y=0.0] {v2}
  \Vertex[L=$v_3$, x=2.0, y=0.0] {v3}
  \Vertex[L=$v_4$, x=3.0, y=0.0] {v4}
  \Vertex[L=$v_5$, x=4.0, y=0.0] {v5}

  \Vertex[L=$f_1$, x=0.0, y=-1.4] {f1}
  \Vertex[L=$f_2$, x=1.0, y=-1.4] {f2}
  \Vertex[L=$f_3$, x=2.0, y=-1.4] {f3}
  \Vertex[L=$f_4$, x=3.0, y=-1.4] {f4}
  \Vertex[L=$f_5$, x=4.0, y=-1.4] {f5}

  \draw[EdgeStyle, style={-}](v1) to (f1);
  \draw[EdgeStyle, style={-}](v2) to (f2);
  \draw[EdgeStyle, style={-}](v3) to (f3);
  \draw[EdgeStyle, style={-}](v4) to (f4);
  \draw[EdgeStyle, style={-}](v5) to (f5);
  \draw[EdgeStyle, style={-}](v1) to (f2);
  \draw[EdgeStyle, style={-}](v2) to (f3);
  \draw[EdgeStyle, style={-}](v3) to (f4);
  \draw[EdgeStyle, style={-}](v4) to (f5);
  \draw[EdgeStyle, style={-}](v4) to (f2);
  \end{tikzpicture}
  \caption{Bipartite graph.}
  \label{fig:supplement bipartite graph}
\end{subfigure}\hfill
\begin{subfigure}[b]{0.3\linewidth}
  \centering
  \begin{tikzpicture}[scale=0.75,every node/.style={transform shape}]
  \GraphInit[vstyle=Normal]
  \SetGraphUnit{1}
  \Vertex[L=$v_1$, x=0.0, y=0.0] {v1}
  \Vertex[L=$v_2$, x=1.0, y=0.0] {v2}
  \Vertex[L=$v_3$, x=2.0, y=0.0] {v3}
  \Vertex[L=$v_4$, x=3.0, y=0.0] {v4}
  \Vertex[L=$v_5$, x=4.0, y=0.0] {v5}

  \Vertex[L=$f_1$, x=0.0, y=-1.4] {f1}
  \Vertex[L=$f_2$, x=1.0, y=-1.4] {f2}
  \Vertex[L=$f_3$, x=2.0, y=-1.4] {f3}
  \Vertex[L=$f_4$, x=3.0, y=-1.4] {f4}
  \Vertex[L=$f_5$, x=4.0, y=-1.4] {f5}

  \draw[EdgeStyle, style={stealth-, blue, ultra thick}](v1) to (f1);
  \draw[EdgeStyle, style={stealth-, blue, ultra thick}](v2) to (f2);
  \draw[EdgeStyle, style={stealth-, blue, ultra thick}](v3) to (f3);
  \draw[EdgeStyle, style={stealth-, blue, ultra thick}](v4) to (f4);
  \draw[EdgeStyle, style={stealth-, blue, ultra thick}](v5) to (f5);
  \draw[EdgeStyle, style={-stealth, gray}](v1) to (f2);
  \draw[EdgeStyle, style={-stealth, gray}](v2) to (f3);
  \draw[EdgeStyle, style={-stealth, gray}](v3) to (f4);
  \draw[EdgeStyle, style={-stealth, gray}](v4) to (f5);
  \draw[EdgeStyle, style={-stealth, gray}](v4) to (f2);
  \end{tikzpicture}
  \caption{Directed graph ($M_1$).}
  \label{fig:supplement directed graph 1}
\end{subfigure}\hfill
\begin{subfigure}[b]{0.3\linewidth}
  \centering
  \begin{tikzpicture}[scale=0.75,every node/.style={transform shape}]
  \GraphInit[vstyle=Normal]
  \SetGraphUnit{1}
  \Vertex[L=$v_1$, x=0.0, y=0.0] {v1}
  \Vertex[L=$v_2$, x=1.0, y=0.0] {v2}
  \Vertex[L=$v_3$, x=2.0, y=0.0] {v3}
  \Vertex[L=$v_4$, x=3.0, y=0.0] {v4}
  \Vertex[L=$v_5$, x=4.0, y=0.0] {v5}

  \Vertex[L=$f_1$, x=0.0, y=-1.4] {f1}
  \Vertex[L=$f_2$, x=1.0, y=-1.4] {f2}
  \Vertex[L=$f_3$, x=2.0, y=-1.4] {f3}
  \Vertex[L=$f_4$, x=3.0, y=-1.4] {f4}
  \Vertex[L=$f_5$, x=4.0, y=-1.4] {f5}

  \draw[EdgeStyle, style={stealth-, orange, ultra thick}](v1) to (f1);
  \draw[EdgeStyle, style={-stealth}](v2) to (f2);
  \draw[EdgeStyle, style={-stealth}](v3) to (f3);
  \draw[EdgeStyle, style={-stealth}](v4) to (f4);
  \draw[EdgeStyle, style={stealth-, orange, ultra thick}](v5) to (f5);
  \draw[EdgeStyle, style={-stealth}](v1) to (f2);
  \draw[EdgeStyle, style={stealth-, orange, ultra thick}](v2) to (f3);
  \draw[EdgeStyle, style={stealth-, orange, ultra thick}](v3) to (f4);
  \draw[EdgeStyle, style={-stealth}](v4) to (f5);
  \draw[EdgeStyle, style={stealth-, orange, ultra thick}](v4) to (f2);
  \end{tikzpicture}
  \caption{Directed graph ($M_2$).}
  \label{fig:supplement directed graph 2}
\end{subfigure}%

\medskip
\begin{subfigure}[b]{0.3\linewidth}
  \centering
  \begin{tikzpicture}[scale=0.65,every node/.style={transform shape}]
  \GraphInit[vstyle=Normal]
  \SetGraphUnit{1}
  \Vertex[L=$v_1$, x=0.0, y=0.0] {v1}
  \Vertex[L=$v_2$, x=1.2, y=0.0] {v2}
  \Vertex[L=$v_3$, x=2.2, y=0.0] {v3}
  \Vertex[L=$v_4$, x=3.2, y=0.0] {v4}
  \Vertex[L=$v_5$, x=4.4, y=0.0] {v5}

  \Vertex[L=$f_1$, x=0.0, y=-1.4] {f1}
  \Vertex[L=$f_2$, x=1.2, y=-1.4] {f2}
  \Vertex[L=$f_3$, x=2.2, y=-1.4] {f3}
  \Vertex[L=$f_4$, x=3.2, y=-1.4] {f4}
  \Vertex[L=$f_5$, x=4.4, y=-1.4] {f5}

  \node[draw=black, fit=(v1), inner sep=0.1cm ]{};
  \node[draw=black, fit=(f1), inner sep=0.1cm ]{};
  \node[draw=black, fit=(v2) (f2) (v3) (f3) (v4) (f4), inner sep=0.1cm ]{};
  \node[draw=black, fit=(v5), inner sep=0.1cm ]{};
  \node[draw=black, fit=(f5), inner sep=0.1cm ]{};

  \draw[EdgeStyle, style={stealth-, blue, ultra thick}](v1) to (f1);
  \draw[EdgeStyle, style={stealth-, blue, ultra thick}](v2) to (f2);
  \draw[EdgeStyle, style={stealth-, blue, ultra thick}](v3) to (f3);
  \draw[EdgeStyle, style={stealth-, blue, ultra thick}](v4) to (f4);
  \draw[EdgeStyle, style={stealth-, blue, ultra thick}](v5) to (f5);
  \draw[EdgeStyle, style={-stealth, gray}](v1) to (f2);
  \draw[EdgeStyle, style={-stealth, gray}](v2) to (f3);
  \draw[EdgeStyle, style={-stealth, gray}](v3) to (f4);
  \draw[EdgeStyle, style={-stealth, gray}](v4) to (f5);
  \draw[EdgeStyle, style={-stealth, gray}](v4) to (f2);
  \end{tikzpicture}
  \caption{Segmentation ($M_1$)}
  \label{fig:supplement strongly connected components 1}
\end{subfigure}\hfill
\begin{subfigure}[b]{0.3\linewidth}
  \centering
  \begin{tikzpicture}[scale=0.65,every node/.style={transform shape}]
  \GraphInit[vstyle=Normal]
  \SetGraphUnit{1}
  \Vertex[L=$v_1$, x=0.0, y=0.0] {v1}
  \Vertex[L=$v_2$, x=1.2, y=0.0] {v2}
  \Vertex[L=$v_3$, x=2.2, y=0.0] {v3}
  \Vertex[L=$v_4$, x=3.2, y=0.0] {v4}
  \Vertex[L=$v_5$, x=4.4, y=0.0] {v5}

  \Vertex[L=$f_1$, x=0.0, y=-1.4] {f1}
  \Vertex[L=$f_2$, x=1.2, y=-1.4] {f2}
  \Vertex[L=$f_3$, x=2.2, y=-1.4] {f3}
  \Vertex[L=$f_4$, x=3.2, y=-1.4] {f4}
  \Vertex[L=$f_5$, x=4.4, y=-1.4] {f5}

  \node[draw=black, fit=(v1), inner sep=0.1cm ]{};
  \node[draw=black, fit=(f1), inner sep=0.1cm ]{};
  \node[draw=black, fit=(v2) (f2) (v3) (f3) (v4) (f4), inner sep=0.1cm ]{};
  \node[draw=black, fit=(v5), inner sep=0.1cm ]{};
  \node[draw=black, fit=(f5), inner sep=0.1cm ]{};

  \draw[EdgeStyle, style={stealth-, orange, ultra thick}](v1) to (f1);
  \draw[EdgeStyle, style={-stealth}](v2) to (f2);
  \draw[EdgeStyle, style={-stealth}](v3) to (f3);
  \draw[EdgeStyle, style={-stealth}](v4) to (f4);
  \draw[EdgeStyle, style={stealth-, orange, ultra thick}](v5) to (f5);
  \draw[EdgeStyle, style={-stealth}](v1) to (f2);
  \draw[EdgeStyle, style={stealth-, orange, ultra thick}](v2) to (f3);
  \draw[EdgeStyle, style={stealth-, orange, ultra thick}](v3) to (f4);
  \draw[EdgeStyle, style={-stealth}](v4) to (f5);
  \draw[EdgeStyle, style={stealth-, orange, ultra thick}](v4) to (f2);
  \end{tikzpicture}
  \caption{Segmentation ($M_2$)}
  \label{fig:supplement strongly connected components 2}
\end{subfigure}\hfill
\begin{subfigure}[b]{0.3\linewidth}
  \centering
  \begin{tikzpicture}[scale=0.65,every node/.style={transform shape}]
  \GraphInit[vstyle=Normal]
  \SetGraphUnit{1}
  \Vertex[L=$v_1$, x=0.0, y=0.0] {v1}
  \Vertex[L=$v_2$, x=1.2, y=0.0] {v2}
  \Vertex[L=$v_3$, x=2.2, y=0.0] {v3}
  \Vertex[L=$v_4$, x=3.2, y=0.0] {v4}
  \Vertex[L=$v_5$, x=4.4, y=0.0] {v5}

  \Vertex[L=$f_1$, x=0.0, y=-1.0] {f1}
  \Vertex[L=$f_2$, x=1.2, y=-1.0] {f2}
  \Vertex[L=$f_3$, x=2.2, y=-1.0] {f3}
  \Vertex[L=$f_4$, x=3.2, y=-1.0] {f4}
  \Vertex[L=$f_5$, x=4.4, y=-1.0] {f5}

  \Vertex[L=$w_1$, x=-0.2, y=-2.2, style={dashed}] {w1}
  \Vertex[L=$w_2$, x=1.0, y=-2.2, style={dashed}] {w2}
  \Vertex[L=$w_3$, x=2.2, y=-2.2, style={dashed}] {w3}
  \Vertex[L=$w_4$, x=3.4, y=-2.2, style={dashed}] {w4}
  \Vertex[L=$w_5$, x=4.6, y=-2.2, style={dashed}] {w5}

  \node[draw=black, fit=(v1) (f1), inner sep=0.1cm ]{};
  \node[draw=black, fit=(v2) (f2) (v3) (f3) (v4) (f4), inner sep=0.1cm ]{};
  \node[draw=black, fit=(v5) (f5), inner sep=0.1cm ]{};

  \node[draw=black, fit=(w1), inner sep=0.1cm ]{};
  \node[draw=black, fit=(w2), inner sep=0.1cm ]{};
  \node[draw=black, fit=(w3), inner sep=0.1cm ]{};
  \node[draw=black, fit=(w4), inner sep=0.1cm ]{};
  \node[draw=black, fit=(w5), inner sep=0.1cm ]{};

  \draw[EdgeStyle, style={-stealth, very thick}](v1) to (0.7,0.0);
  \draw[EdgeStyle, style={-stealth, very thick}](v4) to (3.9,0.0);
  \draw[EdgeStyle, style={-stealth, very thick}](w1) to (0.0,-1.5);
  \draw[EdgeStyle, style={-stealth, very thick}](w2) to (1.2,-1.5);
  \draw[EdgeStyle, style={-stealth, very thick}](w3) to (2.2,-1.5);
  \draw[EdgeStyle, style={-stealth, very thick}](w4) to (3.2,-1.5);
  \draw[EdgeStyle, style={-stealth, very thick}](w5) to (4.4,-1.5);
  \end{tikzpicture}
  \caption{Causal ordering graph.}
  \label{fig:supplement causal ordering graph}
\end{subfigure}
\caption{Graphical illustration of the causal ordering algorithm that was described in Section~\ref{sec:introduction:causal ordering graph}. Figure~\ref{fig:supplement bipartite graph} shows the bipartite graph that is associated with equations \eqref{eq:f1} to \eqref{eq:f5}. Application of the first step of the causal ordering algorithm results in the directed graph in Figure~\ref{fig:supplement directed graph 1} for perfect matching $M_1$ and that in Figure~\ref{fig:supplement directed graph 2} for perfect matching $M_2$. The blue and orange edges correspond to the edges in the perfect matchings $M_1$ and $M_2$, respectively. Figures \ref{fig:supplement strongly connected components 1} and \ref{fig:supplement strongly connected components 2} show that the segmentation into strongly connected components does not depend on the choice of the perfect matching. Exogenous vertices and edges from these vertices to clusters were added to the causal ordering graph in Figure~\ref{fig:supplement causal ordering graph}.}
\label{fig:supplement causal ordering algorithm}
\end{figure}
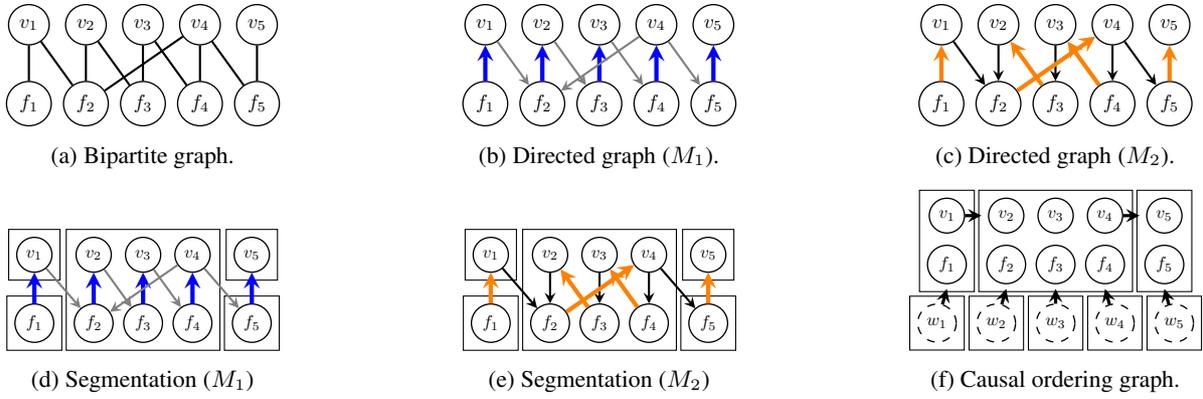

\subsection{Example: signaling cascade model}\label{sec:supplement:signaling_cascade_model}

In this supplementary section we provide more details on the signaling cascade model that is discussed in Section~\ref{sec:signaling_cascade_model}.

We denote the concentrations of active (phosphorylated) RAS, RAF, MEK, and ERK proteins, respectively, by
$X_{s}(t)$, $X_{r}(t)$, $X_{m}(t)$, and $X_{e}(t)$, and denote by $I(t)$ an external stimulus or perturbation.
The system dynamics is modeled by differential equations:
\begin{align}
\label{eq:mapk s}
\dot{X}_{s}(t) &= I(t) \frac{k_{Is} \left(T_s-X_s(t)\right)}{\left(K_{Is} + (T_s-X_s(t)) \right) \left(1+\left(\frac{X_e(t)}{K_e}\right)^{\frac{3}{2}}\right) } -  F_{s} k_{F_s s} \frac{X_s(t)}{K_{F_s s} + X_s(t)} \\
\label{eq:mapk r}
\dot{X}_r(t) &= \frac{X_s(t) k_{sr} (T_r - X_r(t))}{K_{sr} + (T_r - X_r(t))} - F_r k_{F_r r} \frac{X_r(t)}{K_{F_r r} + X_r(t)} \\
\label{eq:mapk m}
\dot{X}_m(t) &= \frac{X_r(t) k_{rm} (T_m - X_m(t))}{K_{rm} + (T_m - X_m(t))} - F_m k_{F_m m} \frac{X_m(t)}{K_{F_m m} + X_m(t)} \\
\label{eq:mapk e}
\dot{X}_e(t) &= \frac{X_m(t) k_{me} (T_e - X_e(t))}{K_{me} + (T_e - X_e(t))} - F_e k_{F_e e} \frac{X_e(t)}{K_{F_e e} + X_e(t)}.
\end{align}
These dynamical equations correspond with a signaling pathway that goes from $I(t)$ to $X_s(t)$ to $X_r(t)$ to $X_m(t)$ to $X_e(t)$ with negative feedback from $X_e(t)$ on $X_s(t)$.
We will study this system in a certain saturated regime;
specifically, for $(T_e-X_e(t))\gg K_{me}$ and $X_e(t)\gg K_{F_e e}$ the following approximation of (\ref{eq:mapk e}) holds:
\begin{align}
\label{eq:mapk e approx}
\dot{X}_e(t) \approx X_m(t) k_{me} - F_e k_{F_e e}.
\end{align}

Thus, the saturated dynamical model that we consider consists of differential equations \eqref{eq:mapk s}, \eqref{eq:mapk r}, \eqref{eq:mapk m} and \eqref{eq:mapk e approx}.
The corresponding equilibrium equations of the saturated model are given by:
\begin{align}
\label{eq:mapk s eq}
0 &= I \frac{k_{Is} \left(T_s-X_s\right)}{\left(K_{Is} + (T_s-X_s) \right) \left(1+\left(\frac{X_e}{K_e}\right)^{\frac{3}{2}}\right) } -  F_{s} k_{F_s s} \frac{X_s}{K_{F_s s} + X_s} \\
\label{eq:mapk r eq}
0 &= \frac{X_s k_{sr} (T_r - X_r)}{K_{sr} + (T_r - X_r)} - F_r k_{F_r r} \frac{X_r}{K_{F_r r} + X_r} \\
\label{eq:mapk m eq}
0 &= \frac{X_r k_{rm} (T_m - X_m)}{K_{rm} + (T_m - X_m)} - F_m k_{F_m m} \frac{X_m}{K_{F_m m} + X_m} \\
\label{eq:mapk e eq approx}
0 &= X_m k_{me} - F_e k_{F_e e},
\end{align}
where we also assume the input signal $I$ to be stationary (constant in time).

Figure~\ref{fig:protein pathway} shows the results of applying the causal ordering procedure to the full model, and to the partial model that treats the equilibrium ERK concentration $X_e$ as unobserved and assumes it to be exogenous with respect to the observed concentrations $X_s$, $X_r$ and $X_m$ of RAS, RAF and MEK, respectively.

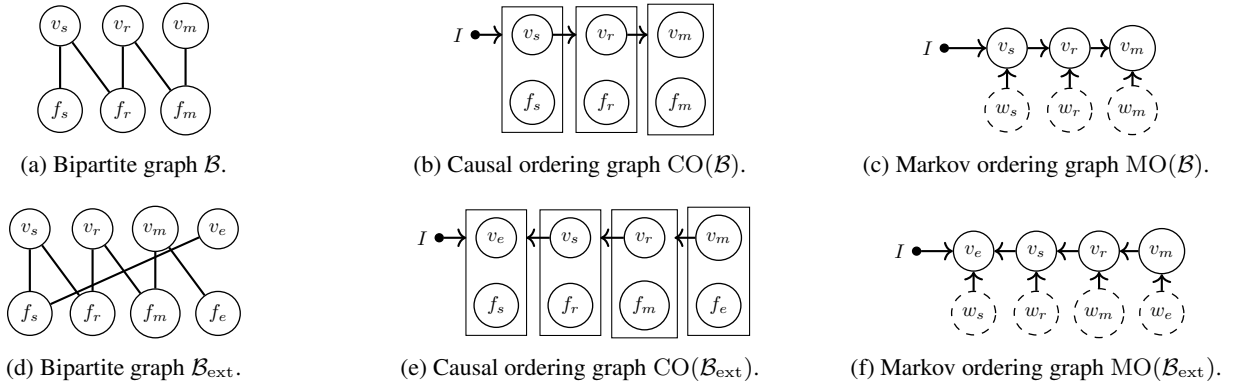
\begin{figure}[ht]
	\begin{subfigure}[b]{0.3\textwidth}
		\centering
		\vspace*{3mm}
		\begin{tikzpicture}[scale=0.75,every node/.style={transform shape}]
		\GraphInit[vstyle=Normal]
		\SetGraphUnit{1}
		\Vertex[L=$v_s$,x=0,y=0] {vs}
		\Vertex[L=$v_r$,x=1.1,y=0] {vr}
		\Vertex[L=$v_m$,x=2.2,y=0] {vm}
		\Vertex[L=$f_s$,x=0,y=-1.5] {fs}
		\Vertex[L=$f_r$,x=1.1,y=-1.5] {fr}
		\Vertex[L=$f_m$,x=2.2,y=-1.5] {fm}
		\draw[EdgeStyle, style={-}](vs) to (fs);
		\draw[EdgeStyle, style={-}](vs) to (fr);
		\draw[EdgeStyle, style={-}](vr) to (fr);
		\draw[EdgeStyle, style={-}](vr) to (fm);
		\draw[EdgeStyle, style={-}](vm) to (fm);
		\end{tikzpicture}
    \caption{Bipartite graph $\BG$.}
		\label{fig:protein pathway:submodel:bipartite graph}
	\end{subfigure}%
	\begin{subfigure}[b]{0.4\textwidth}
		\centering
		\begin{tikzpicture}[scale=0.75,every node/.style={transform shape}]
		\GraphInit[vstyle=Normal]
		\SetGraphUnit{1}
		\Vertex[L=$v_m$,x=2.6,y=0] {vm}
		\Vertex[L=$v_r$,x=1.3,y=0] {vr}
		\Vertex[L=$v_s$,x=0.0,y=0] {vs}
		\Vertex[L=$f_m$,x=2.6,y=-1.2] {fm}
		\Vertex[L=$f_r$,x=1.3,y=-1.2] {fr}
		\Vertex[L=$f_s$,x=0.0,y=-1.2] {fs}
		
		\begin{scope}[VertexStyle/.append style = {minimum size = 4pt, 
			inner sep = 0pt,
			color=black}]
		\Vertex[x=-1.0, y=0.0, L=$I$, Lpos=180, LabelOut]{I}
		\end{scope}

    \node[draw=black, fit=(vm) (fm), inner sep=0.1cm]{};
    \node[draw=black, fit=(vr) (fr), inner sep=0.1cm]{};
		\node[draw=black, fit=(vs) (fs), inner sep=0.1cm]{};
		
		\draw[EdgeStyle, style={->}](I) to (-0.525,0.0);
		\draw[EdgeStyle, style={->}](vs) to (0.775,0.0);
		\draw[EdgeStyle, style={->}](vr) to (2.025,0.0);
		\end{tikzpicture}
    \caption{Causal ordering graph $\mathrm{CO}(\BG)$.}
		\label{fig:protein pathway:submodel:causal ordering graph}
	\end{subfigure}%
	\begin{subfigure}[b]{0.3\textwidth}
		\centering
		\begin{tikzpicture}[scale=0.75,every node/.style={transform shape}]
		\GraphInit[vstyle=Normal]
		\SetGraphUnit{1}
		\Vertex[L=$v_s$, x=1.1, y=0] {vS}
		\Vertex[L=$v_r$, x=2.2, y=0] {vR}
		\Vertex[L=$v_m$, x=3.3, y=0] {vM}
		\Vertex[L=$w_{s}$, x=1.1, y=-1.1, style={dashed}] {wS}
		\Vertex[L=$w_{r}$, x=2.2, y=-1.1, style={dashed}] {wR}
 		\Vertex[L=$w_{m}$, x=3.3, y=-1.1, style={dashed}] {wM}
		
		\begin{scope}[VertexStyle/.append style = {minimum size = 4pt, 
			inner sep = 0pt,
			color=black}]
		\Vertex[x=0.0, y=0.0, L=$I$, Lpos=180, LabelOut]{I}
		\end{scope}
    \draw[EdgeStyle, style={->}](I) to (vS);
    \draw[EdgeStyle, style={->}](wS) to (vS);
    \draw[EdgeStyle, style={->}](wR) to (vR);
    \draw[EdgeStyle, style={->}](wM) to (vM);
    \draw[EdgeStyle, style={->}](vR) to (vM);
    \draw[EdgeStyle, style={->}](vS) to (vR);
		\end{tikzpicture}
		\caption{Markov ordering graph $\mathrm{MO}(\BG)$.}
		\label{fig:protein pathway:submodel:markov ordering graph2}
	\end{subfigure}
	\begin{subfigure}[b]{0.3\textwidth}
		\centering
		\vspace*{3mm}
		\begin{tikzpicture}[scale=0.75,every node/.style={transform shape}]
		\GraphInit[vstyle=Normal]
		\SetGraphUnit{1}
		\Vertex[L=$v_s$,x=0,y=0] {vs}
		\Vertex[L=$v_r$,x=1.1,y=0] {vr}
		\Vertex[L=$v_m$,x=2.2,y=0] {vm}
		\Vertex[L=$v_e$,x=3.3,y=0] {ve}
		\Vertex[L=$f_s$,x=0,y=-1.5] {fs}
		\Vertex[L=$f_r$,x=1.1,y=-1.5] {fr}
		\Vertex[L=$f_m$,x=2.2,y=-1.5] {fm}
		\Vertex[L=$f_e$,x=3.3,y=-1.5] {fe}
		\draw[EdgeStyle, style={-}](vs) to (fs);
		\draw[EdgeStyle, style={-}](vs) to (fr);
		\draw[EdgeStyle, style={-}](vr) to (fr);
		\draw[EdgeStyle, style={-}](vr) to (fm);
		\draw[EdgeStyle, style={-}](vm) to (fm);
		\draw[EdgeStyle, style={-}](vm) to (fe);
		\draw[EdgeStyle, style={-}](ve) to (fs);
		\end{tikzpicture}
    \caption{Bipartite graph $\BG_{\mathrm{ext}}$.}
		\label{fig:protein pathway:extended model:bipartite graph}
	\end{subfigure}%
	\begin{subfigure}[b]{0.4\textwidth}
		\centering
		\begin{tikzpicture}[scale=0.75,every node/.style={transform shape}]
		\GraphInit[vstyle=Normal]
		\SetGraphUnit{1}
		\Vertex[L=$v_m$,x=3.9,y=0] {vm}
		\Vertex[L=$v_r$,x=2.6,y=0] {vr}
		\Vertex[L=$v_s$,x=1.3,y=0] {vs}
		\Vertex[L=$v_e$,x=0.0,y=0] {ve}
		\Vertex[L=$f_e$,x=3.9,y=-1.2] {fe}
		\Vertex[L=$f_m$,x=2.6,y=-1.2] {fm}
		\Vertex[L=$f_r$,x=1.3,y=-1.2] {fr}
		\Vertex[L=$f_s$,x=0.0,y=-1.2] {fs}
		
		\begin{scope}[VertexStyle/.append style = {minimum size = 4pt, 
			inner sep = 0pt,
			color=black}]
		\Vertex[x=-1.0, y=0.0, L=$I$, Lpos=180, LabelOut]{I}
		\end{scope}

    \node[draw=black, fit=(vm) (fe), inner sep=0.1cm]{};
		\node[draw=black, fit=(vr) (fm), inner sep=0.1cm]{};
		\node[draw=black, fit=(vs) (fr), inner sep=0.1cm]{};
		\node[draw=black, fit=(ve) (fs), inner sep=0.1cm]{};
		
		\draw[EdgeStyle, style={->}](I) to (-0.525,0.0);
		\draw[EdgeStyle, style={->}](vs) to (0.525,0.0);
		\draw[EdgeStyle, style={->}](vr) to (1.825,0.0);
		\draw[EdgeStyle, style={->}](vm) to (3.125,0.0);
		\end{tikzpicture}
		\caption{Causal ordering graph $\mathrm{CO}(\BG_{\mathrm{ext}})$.}
		\label{fig:protein pathway:extended model:causal ordering graph}
	\end{subfigure}%
	\begin{subfigure}[b]{0.3\textwidth}
		\centering
		\begin{tikzpicture}[scale=0.75,every node/.style={transform shape}]
		\GraphInit[vstyle=Normal]
		\SetGraphUnit{1}
		\Vertex[L=$v_e$, x=0.0, y=0] {vE}
		\Vertex[L=$v_s$, x=1.1, y=0] {vS}
		\Vertex[L=$v_r$, x=2.2, y=0] {vR}
		\Vertex[L=$v_m$, x=3.3, y=0] {vM}
		\Vertex[L=$w_{s}$, x=0, y=-1.1, style={dashed}] {wS}
		\Vertex[L=$w_{r}$, x=1.1, y=-1.1, style={dashed}] {wR}
		\Vertex[L=$w_{m}$, x=2.2, y=-1.1, style={dashed}] {wM}
		\Vertex[L=$w_{e}$, x=3.3, y=-1.1, style={dashed}] {wE}
		
		\begin{scope}[VertexStyle/.append style = {minimum size = 4pt, 
			inner sep = 0pt,
			color=black}]
		\Vertex[x=-1.0, y=0.0, L=$I$, Lpos=180, LabelOut]{I}
		\end{scope}
		\draw[EdgeStyle, style={->}](I) to (vE);
		\draw[EdgeStyle, style={->}](wS) to (vE);
		\draw[EdgeStyle, style={->}](wR) to (vS);
		\draw[EdgeStyle, style={->}](wM) to (vR);
		\draw[EdgeStyle, style={->}](wE) to (vM);
		\draw[EdgeStyle, style={->}](vM) to (vR);
		\draw[EdgeStyle, style={->}](vR) to (vS);
		\draw[EdgeStyle, style={->}](vS) to (vE);
		\end{tikzpicture}
    \caption{Markov ordering graph $\mathrm{MO}(\BG_{\mathrm{ext}})$.}
		\label{fig:protein pathway:extended model:markov ordering graph2}
	\end{subfigure}
  \caption{Graphs associated with the saturated protein signaling pathway model, where indices $s,r,m,e$ correspond to concentrations of active RAS, RAF, MEK and ERK respectively, and $I$ is an exogenous input signal.
  Top row: submodel for RAS, RAF and MEK only.
  Bottom row: model extension with ERK.}
	\label{fig:protein pathway}
\end{figure}

\subsection{Proofs}
\label{sec:supplement:proofs}

\begin{manualtheorem}{\ref{thm:preserved causal predictions}}
  Consider model equations $F$ containing endogenous variables $V$ with bipartite graph $\BG$. Suppose $F$ is extended with equations $F_+$ containing endogenous variables in $V\cup V_+$, where $V_+$ contains endogenous variables that are added by the model extension (which may include parameters or exogenous variables that appear in $F$ and become endogenous in the extended model). Let $\BG_{\mathrm{ext}}$ be the bipartite graph associated with $F_{\mathrm{ext}}=F\cup F_+$ and $V_{\mathrm{ext}}=V\cup V_+$, and $\BG_+$ the bipartite graph associated with the extension $F_+$ and $V_+$, where variables in $V$ appearing in $F^+$ are treated as exogenous variables (i.e.\ they are not added as vertices in $\BG_+$). If $\BG$ and $\BG_+$ both have a perfect matching then:
  \begin{enumerate}
    \item $\BG_{\mathrm{ext}}$ has a perfect matching,
    \item ancestral relations in $\mathrm{CO}(\BG)$ are also present in $\mathrm{CO}(\BG_{\mathrm{ext}})$,
    \item d-connections in $\mathrm{MO}(\BG)$ are also present in $\mathrm{MO}(\BG_{\mathrm{ext}})$.
  \end{enumerate}
\end{manualtheorem}
\begin{proof}
The causal ordering graph $\mathrm{CO}(\BG)$ is constructed from a perfect matching $M$ for the bipartite graph $\BG=\tuple{V,F,E}$. Let $M_+$ be a perfect matching for $\BG_+$. Note that $M_{\mathrm{ext}}=M\cup M_+$ is a perfect matching for $\BG_{\mathrm{ext}}=\tuple{V\cup V_+, F\cup F_+, E_{\mathrm{ext}}}$. Following the causal ordering algorithm for $\BG, M$ and $\BG_{\mathrm{ext}}, M_{\mathrm{ext}}$, we note that $\G(\BG, M)$ is a subgraph of $\G(\BG_{\mathrm{ext}},M_{\mathrm{ext}})$ and hence clusters in $\mathrm{CO}(\BG)$ are fully contained in clusters in $\mathrm{CO}(\BG_{\mathrm{ext}})$. Therefore ancestral relations in $\mathrm{CO}(\BG)$ are also present in $\mathrm{CO}(\BG_{\mathrm{ext}})$.

It follows directly from the definition \citep{Forre2017} that $\sigma$-connections in a graph remain present if the graph is extended with additional vertices and edges. The directed graphs $\G(\BG, M)$ and $\G(\BG_{\mathrm{ext}},M_{\mathrm{ext}})$ can be augmented with exogenous variables by adding exogenous vertices to these graphs with directed edges towards the equations in which they appear. The $\sigma$-connections in the augmentation of $\G(\BG, M)$ must also be present in the augmentation of $\G(\BG_{\mathrm{ext}},M_{\mathrm{ext}})$. By \citep[Corollary 2.8.4, ][]{Forre2017} and \citep[Lemma~43, ][]{Blom2020} we have that d-connections in $\mathrm{MO}(\BG)$ must also be present in $\mathrm{MO}(\BG_{\mathrm{ext}})$.
\end{proof}

\begin{manualtheorem}{\ref{thm:preserveration of conditional independence relations}}
  Let $F$, $F_+$, $F_{\mathrm{ext}}$, $V$, $V_+$, $V_{\mathrm{ext}}$, $\BG$, $\BG_+$, and $\BG_{\mathrm{ext}}$ be as in Theorem~\ref{thm:preserved causal predictions}. If $\BG$ and $\BG_+$ both have perfect matchings and no vertex in $V_+$ is adjacent to a vertex in $F$ in $\BG_{\mathrm{ext}}$ then:
  \begin{enumerate}
    \item ancestral relations absent in $\mathrm{CO}(\BG)$ are also absent in $\mathrm{CO}(\BG_{\mathrm{ext}})$,
    \item d-connections absent in $\mathrm{MO}(\BG)$ are also absent in $\mathrm{MO}(\BG_{\mathrm{ext}})$.
  \end{enumerate}
\end{manualtheorem}
\begin{proof}
Since $\BG$ and $\BG_+$ both have perfect matchings the results of Theorem~\ref{thm:preserved causal predictions} hold. Let $\G(\BG,M)$, and $\G(\BG_{\mathrm{ext}}, M_{\mathrm{ext}})$ be as in the proof of Theorem~\ref{thm:preserved causal predictions}. Note that in $M_{\mathrm{ext}}$ vertices in $F_+$ are matched to vertices in $V_+$ and therefore edges between $f_+ \in F_+$ and $v\in\adj{\BG_{\mathrm{ext}}}{F_+}\setminus V_+$ are oriented as $(f_+\leftarrow v)$ in $\G(\BG_{\mathrm{ext}}, M_{\mathrm{ext}})$. By assumption, we therefore have that vertices in $V_+$ are non-ancestors of vertices in $V\cup F$ in $\G(\BG_{\mathrm{ext}}, M_{\mathrm{ext}})$. Since $M\subseteq M_{\mathrm{ext}}$ we know that the same directed edges between vertices in $V$ and $F$ appear in both $\G(\BG,M)$ and $\G(\BG_{\mathrm{ext}}, M_{\mathrm{ext}})$. Notice that the subgraph of $\G(\BG_{\mathrm{ext}}, M_{\mathrm{ext}})$ induced by the vertices $V\cup F$ coincides with $\G(\BG,M)$. Hence $\mathrm{CO}(\BG)$ is the induced subgraph of $\mathrm{CO}(\BG_{\mathrm{ext}})$ and $\mathrm{MO}(\BG)$ is the induced subgraph of $\mathrm{MO}(\BG_{\mathrm{ext}})$.
\end{proof}

\begin{lemma}
\label{lemma:selfregulating perfect matching}
Consider a first-order dynamical model in canonical form for endogenous variables $V$ and let $F$ be the equilibrium equations of the model. If all variables in $V$ are self-regulating then $\BG$ has a perfect matching.
\end{lemma}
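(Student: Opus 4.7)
The plan is to exhibit an explicit perfect matching for $\BG$, namely the one given by the natural labelling $M_{\mathrm{nat}} = \{(v_i - f_i) : v_i \in V\}$, and then verify two things: that every pair $(v_i-f_i)$ in $M_{\mathrm{nat}}$ is genuinely an edge of $\BG$, and that $M_{\mathrm{nat}}$ covers every vertex of $V\cup F$ exactly once.

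For the first point, I would unpack the self-regulating hypothesis. The natural labelling associates each derivative $\dot{X}_i(t)=g_i(X(t))$ with the equilibrium equation $f_i:\,g_i(X)=0$. By the definition of self-regulation recalled just before the lemma, the equation $f_i$ can be solved uniquely for $X_i$. This forces $X_i$ to appear in $f_i$: if $X_i$ did not appear in $f_i$, then $f_i$ would either be trivially satisfied by every value of $X_i$ or by none, so in particular it could not determine $X_i$ uniquely as a function of the remaining variables. Hence the edge $(v_i - f_i)$ is present in $\BG$ for every $v_i \in V$.

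For the second point, because $F$ is the set of equilibrium equations of a first-order canonical system, the natural labelling provides a bijection between $V$ and $F$: each endogenous variable $X_i$ gives rise to exactly one equation $f_i$, and each equation in $F$ comes from exactly one derivative. Consequently the edge set $M_{\mathrm{nat}}$ contains each $v_i$ in exactly one edge and each $f_i$ in exactly one edge, so it is a perfect matching of $\BG$.

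I do not anticipate a serious obstacle here; the only subtlety is making explicit the step that self-regulation implies incidence of $v_i$ and $f_i$ in $\BG$, which I would state as a short remark (possibly referencing the definition of the bipartite graph $\BG=\tuple{V,F,E}$ given in Section~\ref{sec:introduction:causal ordering graph}, where $(v-f)\in E$ iff variable $v$ appears in equation $f$). Everything else is bookkeeping about the natural labelling being a bijection.
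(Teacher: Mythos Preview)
Your proposal is correct and follows essentially the same approach as the paper: exhibit the natural-labelling matching $\{(v_i - f_i)\}$ and observe that self-regulation guarantees each such edge is present in $\BG$. You spell out in more detail why self-regulation forces $X_i$ to appear in $f_i$ and why the natural labelling is a bijection, but the underlying argument is identical to the paper's (very terse) proof.
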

\begin{proof}
Recall that the equilibrium equation constructed from the derivative of a variable $i$ is labelled $f_i$ according to the natural labelling. When a variable in $v_i\in V$ is self-regulating then it can be matched to its equilibrium equation $f_i$. If this holds for all variables in $V$ then $\BG$ has a perfect matching.
\end{proof}

\begin{lemma}
\label{lemma:uniqueness of directed graph}
Let $\BG$ be a bipartite graph and let $M$ and $M'$ be two distinct perfect matchings. The associated directed graphs $\G(\BG,M)$ and $\G(\BG,M')$ that are obtained in step \ref{alg:step 1} of the causal ordering algorithm differ only in the direction of cycles.
\end{lemma}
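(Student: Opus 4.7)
The plan is to reduce the statement to the standard fact from matching theory that the symmetric difference of two perfect matchings in a bipartite graph is a disjoint union of alternating even cycles, and then to check that each such alternating cycle becomes a directed cycle in both orientations, traversed in opposite directions.

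First I would observe that step \ref{alg:step 1} orients every edge $(v-f) \in E$ purely on the basis of whether it belongs to the chosen matching: edges in the matching point from $f$ to $v$, and all other edges point from $v$ to $f$. Consequently, any edge $(v-f)$ lying in $M \cap M'$ or in $E \setminus (M \cup M')$ is oriented identically in $\G(\BG,M)$ and $\G(\BG,M')$. Hence the two directed graphs can differ only on the symmetric difference $M \triangle M'$.

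Next I would invoke the classical structural result: in a bipartite graph, the symmetric difference of two perfect matchings is a vertex-disjoint union of cycles whose edges alternate between $M$ and $M'$; these cycles must have even length because the graph is bipartite, and no alternating paths can appear in the symmetric difference since both $M$ and $M'$ saturate every vertex. I would cite this as a standard fact rather than reprove it. Fix one such alternating cycle $C = (v_1, f_1, v_2, f_2, \ldots, v_k, f_k, v_1)$ where, without loss of generality, the edges $(f_i - v_{i+1})$ lie in $M$ and the edges $(v_i - f_i)$ lie in $M'$ (indices mod $k$).

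In $\G(\BG,M)$, the edges in $M$ are oriented $f_i \to v_{i+1}$ and the non-matching edges of $C$ (which lie in $M'$) are oriented $v_i \to f_i$. Concatenating these gives a directed cycle $v_1 \to f_1 \to v_2 \to f_2 \to \cdots \to v_k \to f_k \to v_1$ in $\G(\BG,M)$. Swapping the roles of $M$ and $M'$, exactly the same set of edges of $C$ now receives the opposite orientation, producing the reverse directed cycle $v_1 \to f_k \to v_k \to \cdots \to f_1 \to v_1$ in $\G(\BG,M')$. Since this applies to every cycle of $M \triangle M'$ independently, the directed graphs $\G(\BG,M)$ and $\G(\BG,M')$ agree on all edges outside these alternating cycles and disagree on each alternating cycle exactly by reversing its direction, which is the claim.

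The only real subtlety is the appeal to the symmetric-difference structure of two perfect matchings; once that is stated, everything else is a direct unpacking of the orientation rule in step~\ref{alg:step 1}. I would therefore expect the write-up to be short, with the main obstacle being merely to state the matching-theoretic lemma cleanly (including why the bipartiteness of $\BG$ and the perfectness of $M$ and $M'$ rule out alternating paths), so that the orientation bookkeeping on each alternating cycle becomes transparent.
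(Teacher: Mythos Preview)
Your argument is correct. The paper's own proof takes a quite different route: it simply invokes the uniqueness of the causal ordering output with respect to the choice of perfect matching, citing Theorems~4 and~6 of \citet{Blom2020}, and declares the lemma a direct consequence. By contrast, you give a self-contained elementary proof: isolate the differing edges as $M\triangle M'$, appeal to the standard decomposition of the symmetric difference of two perfect matchings into vertex-disjoint alternating cycles, and then do the orientation bookkeeping on each cycle. Your approach makes the mechanism explicit and does not depend on the external reference (whose proofs, incidentally, ultimately rest on the same alternating-cycle decomposition, so your route is arguably the more primitive one). The paper's approach is shorter on the page but defers all content to the citation. One small stylistic point: the even length of the alternating cycles follows already from the $M$/$M'$ alternation, so you need not invoke bipartiteness for that; bipartiteness is what guarantees the cycle alternates between $V$- and $F$-vertices, which you use when writing $C=(v_1,f_1,\ldots,v_k,f_k,v_1)$.
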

\begin{proof}
  This follows directly from the fact that the output of the causal ordering
  algorithm does not depend on the choice of the perfect matching. This result
  is a direct consequence of Theorem 4 and Theorem 6 in \citet{Blom2020}.
\end{proof}

\begin{manualtheorem}{\ref{thm:feedback}}
Consider a first-order dynamical model in canonical form for endogenous variables $V$ and an extension consisting of canonical first-order differential equations for additional endogenous variables $V_+$. Let $F$ and $F_{\mathrm{ext}}= F\cup F_+$ be the equilibrium equations of the original and extended model respectively. Let $\BG=\tuple{V,F,E}$ be the bipartite graph associated with $F$ and $\BG_{\mathrm{ext}}=\tuple{V_{\mathrm{ext}}, F_{\mathrm{ext}}, E_{\mathrm{ext}}}$ the bipartite graph associated with $F_{\mathrm{ext}}$. Assume that $\BG$ and $\BG_{\mathrm{ext}}$ both have perfect matchings. If the model extension does not introduce a new feedback loop with the original dynamical model, then d-connections in $\mathrm{MO}(\BG)$ are also present in $\mathrm{MO}(\BG_{\mathrm{ext}})$.
\end{manualtheorem}
\begin{proof}
Let $E_{\mathrm{nat}}$ be the set of edges $(v_i-f_i)$ associated with the natural labelling of the equilibrium equations of the extended dynamical model. Note that the feedback loops in the dynamical model coincide with cycles in the directed graph $\G(\BG_{\mathrm{nat}}, M_{\mathrm{nat}})$ that is obtained by applying step \ref{alg:step 1} of the causal ordering algorithm to the bipartite graph $\BG_{\mathrm{nat}}=\tuple{V_{\mathrm{ext}},F_{\mathrm{ext}},E_{\mathrm{ext}}\cup E_{\mathrm{nat}}}$ using the perfect matching $M_{\mathrm{nat}} = E_{\mathrm{nat}}$.

By Theorem~\ref{thm:preserved causal predictions}, we know that if $\BG$ and $\BG_+$ (the subgraph of $\BG_{\mathrm{ext}}$ induced by $V_+\cup F_+$) both have perfect matchings then d-connections in $\mathrm{MO}(\BG)$ must also be present in $\mathrm{MO}(\BG_{\mathrm{ext}})$. Therefore, if there exists a perfect matching $M_{\mathrm{ext}}$ for $\BG_{\mathrm{ext}}$ so that each $f\in F$ is $M_{\mathrm{ext}}$-matched to a vertex $v\in V$ and each $f_+\in F_+$ is $M_{\mathrm{ext}}$-matched to a vertex $v_+\in V_+$ in $\BG_{\mathrm{ext}}$, d-connections in $\mathrm{MO}(\BG)$ are also present in $\mathrm{MO}(\BG_{\mathrm{ext}})$.

We will prove the contrapositive of the theorem, so we start with the assumption that the d-connections in $\mathrm{MO}(\BG)$ are not preserved in $\mathrm{MO}(\BG_{\mathrm{ext}})$. In that case, there must exist a perfect matching $M_{\mathrm{ext}}$ for $\BG_{\mathrm{ext}}$ so that there is an $f\in F$ that is $M_{\mathrm{ext}}$-matched to a $v_+\in V_+$ and a $v\in V$ that is $M_{\mathrm{ext}}$-matched to a $f_+\in F_+$. Note that since $\BG_{\mathrm{ext}}$ is a subgraph of $\BG_{\mathrm{nat}}$, this perfect matching $M_{\mathrm{ext}}$ is also a perfect matching for $\BG_{\mathrm{nat}}$. Lemma~\ref{lemma:uniqueness of directed graph} says that $\G(\BG_{\mathrm{nat}}, M_{\mathrm{nat}})$ and $\G(\BG_{\mathrm{nat}}, M_{\mathrm{ext}})$ only differ in the direction of cycles. We know that vertices in $V$ are only $M_{\mathrm{nat}}$-matched to vertices in $F$, while vertices in $V_+$ are only $M_{\mathrm{nat}}$-matched to vertices in $F_+$. Therefore, the vertices $v_+$ and $f$ must be on a directed cycle in both directed graphs, as well as $v$ and $f_+$. Hence the model extension $F_+$ introduced a new feedback loop that includes variables in the original model.
\end{proof}

\end{document}